\documentclass[aps,pra,twocolumn,nofootinbib, superscriptaddress,10pt]{revtex4-2}
\usepackage{graphicx} 
\usepackage{subfigure}
\usepackage{amsmath,amsfonts,amssymb,amsthm,bm}
\usepackage{bbm}
\usepackage{booktabs}
\usepackage[colorlinks=true,citecolor=blue,linkcolor=blue]{hyperref}
\usepackage{url}
\usepackage{CJK} 
\usepackage{lmodern}
\usepackage{mathrsfs}
\usepackage{chngcntr}
\usepackage{appendix}
\usepackage[dvipsnames]{xcolor}

\usepackage{csquotes}
\MakeOuterQuote{"}
\usepackage{mathtools,mleftright}
\mleftright
\usepackage{multirow}
\usepackage{makecell}
\newcommand{\romanNum}[1]{\uppercase\expandafter{\romannumeral#1}}
\def\identity{\leavevmode\hbox{\small1\kern-3.8pt\normalsize1}}
\newtheorem{theorem}{Theorem}
\newtheorem{lemma}{Lemma}

\newtheorem{corollary}{Corollary}
\newtheorem{proposition}{Proposition}

\theoremstyle{plain}
\newtheorem{definition}{Definition}

\theoremstyle{remark}

\newcommand{\Cl}{\mathrm{Cl}}

\newcommand{\haar}{\mathrm{Haar}}

\newcommand{\rmd}{\mathrm{d}}
\newcommand{\rme}{\operatorname{e}}
\newcommand{\rmi}{\mathrm{i}}

\newcommand{\rmG}{\mathrm{G}}

\newcommand{\rmS}{\mathrm{S}}

\newcommand{\rmR}{\mathrm{R}}

\newcommand{\rmU}{\mathrm{U}}

\newcommand{\SWAP}{\mathrm{SWAP}}

\newcommand{\bfa}{\mathbf{a}}
\newcommand{\bfm}{\mathbf{m}}
\newcommand{\bfn}{\mathbf{n}}
\newcommand{\bfb}{\mathbf{b}}

\newcommand{\bfx}{\mathbf{x}}
\newcommand{\bfy}{\mathbf{y}}

\newcommand{\caE}{\mathcal{E}}

\newcommand{\caH}{\mathcal{H}}

\newcommand{\caO}{\mathcal{O}}
\newcommand{\caP}{\mathcal{P}}

\newcommand{\bbC}{\mathbb{C}}

\newcommand{\bbE}{\mathbb{E}}

\newcommand{\scrN}{\mathscr{N}}

\newcommand{\scrL}{\mathscr{L}}
\newcommand{\scrM}{\mathscr{M}}

\newcommand{\tr}{\operatorname{Tr}}

\newcommand{\lref}[1]{Lemma~\ref{#1}}

\newcommand{\thref}[1]{Theorem~\ref{#1}}

\newcommand{\pref}[1]{Proposition~\ref{#1}}

\newcommand{\coref}[1]{Corollary~\ref{#1}}

\def\eqref#1{\textup{(\ref{#1})}}
\newcommand{\eref}[1]{Eq.~\textup{(\ref{#1})}}
\newcommand{\eqsref}[2]{Eqs.~(\ref{#1}) and (\ref{#2})}

\newcommand{\Eref}[1]{Equation~\textup{(\ref{#1})}}

\newcommand{\sref}[1]{Sec.~\ref{#1}}

\newcommand{\fref}[1]{Fig.~\ref{#1}}

\newcommand{\aref}[1]{Appendix~\ref{#1}}

\def\<{\langle}  
\def\>{\rangle}
\newcommand{\rcite}[1]{Ref.~\cite{#1}}
\newcommand{\rscite}[1]{Refs.~\cite{#1}}

\begin{document}

\title{Unitary designs from perturbed time evolutions of a chaotic Hamiltonian}
\author{\begin{CJK}{UTF8}{gbsn}  Zhongyi Yang (杨中义) \end{CJK}}
\affiliation{International Center for Quantum Materials, School of Physics, Peking University, Beijing 100871, China}
\author{\begin{CJK}{UTF8}{gbsn} Biao Wu (吴飙) \end{CJK}}
\email{wubiao@pku.edu.cn}
\affiliation{International Center for Quantum Materials, School of Physics, Peking University, Beijing 100871, China}
\affiliation{Wilczek Quantum Center, Shanghai Institute for Advanced Studies, University of Science and Technology of China, Shanghai 201315, China}
\affiliation{Hefei National Laboratory, Hefei 230088, China}
\affiliation{Beijing Key Laboratory of Quantum Devices, Peking University, Beijing 100871, China}

\date{\today}

\begin{abstract}

Unitary designs provide efficient substitutes for Haar-random unitaries in quantum information processing, randomized measurements, and many-body quantum dynamics. We propose a protocol based on time evolutions of a single chaotic Hamiltonian with intermediate unitary perturbations to generate unitary designs. We derive the frame potential of the resulting ensemble in terms of those of the intermediate ensemble. The intermediate ensemble need not itself be Haar random or even approximately form a unitary design; it is sufficient that its frame-potential growth remains well suppressed relative to the maximal possible scaling. The nontrivial Pauli set and the Clifford ensemble are simple examples satisfying this criterion, whereas ensembles whose cardinality remains independent of the Hilbert-space dimension generally fail. We further show that, at the level of frame potentials, multi-Hamiltonian temporal protocols can be recursively replaced by one-Hamiltonian protocols interleaved with fixed trace-suppressed perturbations.

\end{abstract}

\maketitle

\emph{Introduction}---Random unitaries underpin quantum chaos and thermalization \cite{RobertsStanford2015,Cotler2017chaos,Liu2018entanglement,Kos2018many} and quantum-information tasks including state characterization \cite{HuangKuengPreskill2020,Eisert2020Cert,Elben2023}, randomized benchmarking \cite{Knill2008randomized,Helsen2022general,Nakata2021quantum}, and cryptography \cite{Portman2022security}. Unlike random-state generation through deep thermalization \cite{Cotler2023emccccergent,choi2023preparing,Zhang2025Holographic,Pilatowsky2024hilbert}, random-unitary generation requires randomness over the full Hilbert space. Because exact Haar sampling is exponentially costly, unitary \(k\)-designs, which reproduce Haar moments up to order \(k\), provide a practical substitute \cite{GrossAudenaertEisert2007,Dankert2009exact,Emerson2003pseudo}. A central challenge is to realize such designs using experimentally feasible dynamics.

In circuit-based architectures, approximate designs can be generated by deep local random circuits or shallow circuits with special structures \cite{BrandaoHarrowHorodecki2016,brandao2016efficient,Haferkamp2022,Nakata2021quantum,Schuster2024random}. However, these protocols typically rely on many independently sampled gates, so different realizations require distinct circuit configurations across sampling rounds. In analog quantum simulators and chaotic many-body systems, a more natural realization of randomness is through Hamiltonian time evolution \cite{Pilatowsky2023complete,Pilatowsky2024hilbert,cui2025random,mao2025random,Zhou2026realizing,Zhou2026three,sun2026unitary}. However, for a fixed Hamiltonian, the temporal orbit is constrained by the energy eigenbasis, so additional ingredients are needed to promote phase ergodicity into unitary ergodicity \cite{Roberts2017chaos,Mark2024}. Recent works showed that chaotic Hamiltonian dynamics can generate unitary designs either from three independently controlled chaotic Hamiltonians or from two chaotic Hamiltonians combined with an intermediate random Pauli operation \cite{Zhou2026three,sun2026unitary}. These results suggest a hierarchy of simplifications, illustrated in \fref{fig:procedure}, in which the amount of independent Hamiltonian control is progressively reduced. Here we ask whether unitary designs can be generated from a single fixed chaotic Hamiltonian, with randomness supplied only by sampled evolution times and intermediate unitary perturbations. This setting is particularly attractive for platforms like trapped ions and cold atoms where switching Hamiltonian settings is resource-intensive, making it a promising route to Hamiltonian-based random-unitary generation.

In this work we show that, supplied with intermediate unitary perturbations, the single-Hamiltonian protocol is sufficient. We consider a temporal ensemble whose unitaries consist of two independently timed evolutions  under the same fixed chaotic Hamiltonian, separated by a unitary
perturbation drawn from an intermediate ensemble.
For typical chaotic Hamiltonians satisfying the additive nonresonance condition, we derive the \(k\)-th frame potential of the resulting ensemble and show that its deviation from the Haar value is controlled by the frame potentials of the intermediate ensemble. This gives a simple criterion: the intermediate ensemble need not be Haar random or a unitary design even approximately; it only needs frame potentials sufficiently suppressed relative to the maximal scaling \(D^{2k}\) with $D$ being the dimension of the Hilbert space. Nontrivial Pauli and Clifford ensembles are natural examples, but the mechanism is more general. 

We further show that, at the level of frame potentials, the effect of an additional independent chaotic Hamiltonian can be reproduced by inserting a fixed traceless perturbation into the evolution of a single chaotic Hamiltonian. Iterating this replacement converts multi-Hamiltonian temporal protocols into one-Hamiltonian protocols interleaved with simple perturbations. In the picture of the process-tensor, these perturbations generate distinct histories whose corresponding unitary trajectories become increasingly orthogonal \cite{Dowling2024operation,Donovan2026diagnos}. The resulting protocol is experimentally accessible and essentially minimal in Hamiltonian control, requiring only a fixed chaotic Hamiltonian, randomly sampled evolution times, and simple intermediate operations such as Pauli pulses.

\emph{Setup}---We consider an $n$-qubit Hilbert space $\caH$ of dimension $D=2^n$. Let $\nu$ be a probability distribution over the unitary group $U(D)$ on $\caH$. An ensemble $\nu$ is an exact unitary $k$-design if its first $k$ moments
coincide with those of the Haar ensemble.  In this work, we diagnose
unitary-design formation through the $k$-th frame potential
\begin{equation}
    F_{\nu}^{(k)}
    =
    \bbE_{U,V\sim\nu}
    \left|\tr(U^\dagger V)\right|^{2k}.
    \label{eq:frame_potential_intro}
\end{equation}
The Haar value $F_{\haar}^{(k)}=k!$ is the minimum over unitary ensembles and is attained precisely by exact unitary $k$-designs \cite{GrossAudenaertEisert2007,RoyScott2009}; hence $F_{\nu}^{(k)}\approx k!$ certifies approximate design formation.

To compare different Hamiltonian-based constructions, we denote by
$\caP_n=\{I,X,Y,Z\}^{\otimes n}$ the set of Hermitian Pauli strings
without phases, and define the nontrivial Pauli set $ \caP_{n}^*=\caP_n\setminus\{I^{\otimes n}\}$. The three temporal ensembles discussed in this work are
\begin{align}
    \caE_{\scrL}(H_1,H_2,H_3)
    &:=
    \left\{
        \rme^{-\rmi H_3 t_3}
        \rme^{-\rmi H_2 t_2}
        \rme^{-\rmi H_1 t_1}
    \right\},\notag\\
    \caE_{\scrM}(H_1,H_2,\caP_{n}^*)
    &:=
    \left\{
        \rme^{-\rmi H_2t_2}
        P
        \rme^{-\rmi H_1t_1}
    \right\},\notag\\
    \caE_{\scrN}(H,\caE_{\mathrm{int}})
    &:=
    \left\{
        \rme^{-\rmi Ht_2}
        U_{\mathrm{int}}
        \rme^{-\rmi Ht_1}
    \right\}.\label{eq:ABC_protocol_def}
\end{align}
Here $t_1,t_2,t_3\in[0,T]$ are sampled independently and uniformly.
$P$ and $U_{\mathrm{int}}$ are sampled from $\caP_n^*$ and $\caE_{\mathrm{int}}$ respectively.  The ensembles
$\caE_{\scrL}$ and $\caE_{\scrM}$ correspond respectively to the
three-Hamiltonian temporal protocol in \rcite{Zhou2026three} and the
Pauli-assisted two-Hamiltonian protocol in \rcite{sun2026unitary}.  

The
main object of this work is $\caE_{\scrN}$, where the same chaotic
Hamiltonian $H$ is used before and after the intermediate operation. 
\begin{figure}[t]
    \centering
    \includegraphics[width=0.49\textwidth]{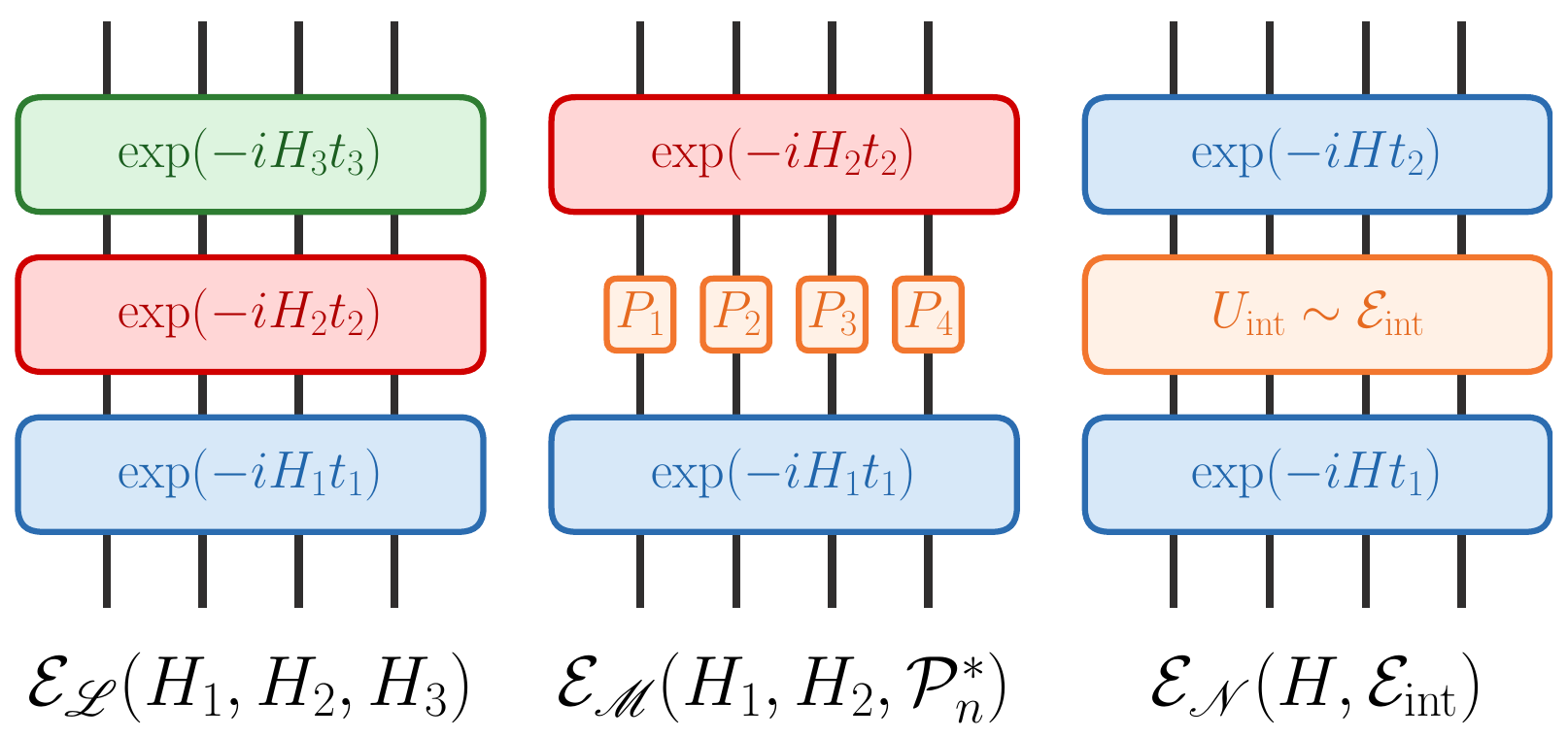}
\caption{Schematic of the three Hamiltonian-based unitary ensembles considered
in this work.  The protocol $\caE_{\scrL}$ uses three independently
controlled chaotic Hamiltonians; $\caE_{\scrM}$ uses two chaotic
Hamiltonians with an intermediate Pauli operation; and $\caE_{\scrN}$
uses one chaotic Hamiltonian together with an intermediate unitary
$U_{\mathrm{int}}\sim\caE_{\mathrm{int}}$.
}
    \label{fig:procedure}
\end{figure}
Explicitly, given a fixed chaotic Hamiltonian \(H\) and a time window \([0,T]\), the frame potential is given by

\begin{equation}
    F_{\caE_{\scrN}}^{(k)}
    =
    \bbE_{\delta t_1,\delta t_2,U,V}
    \Bigl|
        \tr\!\bigl(
        V^\dagger \rme^{-\rmi H\delta t_2}
        U \rme^{-\rmi H\delta t_1}
        \bigr)
    \Bigr|^{2k}.
    \label{eq:quenched_frame_potential_def}
\end{equation}
Here \(U,V\sim\caE_{\mathrm{int}}\) are two independent samples from the intermediate ensemble, and
\(\delta t_i=t_i-t_i'\in[-T,T]\) denotes the difference between two independently sampled evolution times \(t_i,t_i'\in[0,T]\). Different choices of $\caE_{\mathrm{int}}$ lead to different levels of randomness in $\caE_{\scrN}$.  To obtain a simple general theory, we
focus on intermediate ensembles satisfying the following
trace-suppression condition.

\begin{definition}[Trace-suppression condition]
\label{def:trace_suppression_condition}
An intermediate ensemble $\caE_{\mathrm{int}}$ is said to satisfy the
$k$-th order trace-suppression condition if
\begin{equation}
    \bbE_{U\sim\caE_{\mathrm{int}}}
    \left|\tr(U)\right|^{4k}
    =
    \caO_k(D^{2k-\epsilon}),
    \qquad
    \epsilon=\Omega_k(1),
    \label{eq:small_tr_ensemble}
\end{equation}
where $\caO_k(\cdot)$ and $\Omega_k(\cdot)$ denote asymptotic upper and lower bounds on the scaling with $D$ for fixed $k$.
\end{definition}

This class includes many commonly used ensembles in quantum information
processing, such as the nontrivial Pauli set $\caP_{n}^*$, the Clifford
group $\Cl_n$, and some experimentally friendly shallow random circuits.

For a Hamiltonian $H$ with eigenvalues
$\{E_\alpha\}_{\alpha=1}^{D}$, the perfect time-filter limit
$T\to\infty$ removes all oscillatory terms with nonzero frequency in
\eref{eq:quenched_frame_potential_def}. At the $k$-th level, the
surviving terms satisfy the energy-conservation condition
$\sum_{r=1}^{k}(E_{a_r}-E_{b_r})=0$, where $a_r$ and $b_r$ label the
energy eigenstates in the two $k$-replica strings. To exclude
accidental many-body resonances in this condition, we impose the
following spectral assumption.

\begin{definition}[Additive nonresonance condition] \label{def:k_nonresonance} A Hamiltonian $H$ with eigenvalues $\{E_\alpha\}_{\alpha=1}^{D}$ satisfies the $k$-th order additive nonresonance condition if, for any two $k$-replica strings $\bfa=(a_1,\ldots,a_k)$ and $\bfb=(b_1,\ldots,b_k)$, the condition  $\sum_{r=1}^{k}\left(E_{a_r}-E_{b_r}\right)=0$ implies
\begin{equation}
    (a_1,\ldots,a_k)=\pi(b_1,\ldots,b_k) \label{eq:k_nonresonance_condition}, 
    \end{equation} 
    for some permutation $\pi\in S_k$. 
\end{definition}
The surviving terms are then organized into permutation sectors, as used in the proof of \thref{thm:main_oneH_int}. Such additive nonresonance assumptions are standard in long-time averaging \cite{von2010proof,zhang2016ergodicity,Roberts2017chaos,Mark2024,Zhou2026three,Zhou2026realizing,sun2026unitary}; details are given in  \sref{sec:app_time_ave} in Supplemental Material (SM).

\emph{General intermediate ensembles}---The following theorem gives the frame potential of the random unitary ensemble generated by protocol \(\scrN\), for a typical chaotic Hamiltonian and for a broad class of intermediate ensembles \(\caE_{\mathrm{int}}\) satisfying the trace-suppression condition in \eref{eq:small_tr_ensemble}. It is proved in SM \sref{sec:app_prove_thm_main}.

\begin{theorem}
\label{thm:main_oneH_int}
Let \(H\) be a Hamiltonian whose spectrum satisfies the nonresonance condition in \eref{eq:k_nonresonance_condition}, and $\caE_{\mathrm{int}}$ the intermediate ensemble satisfying the trace-suppression condition in \eref{eq:small_tr_ensemble}. Consider the  frame potential of $\caE_{\scrN}(H,\caE_{\mathrm{int}})$ averaged over Haar-random eigenbases,
\begin{equation}
    \overline{F}_{\caE_{\scrN}(H,\caE_{\mathrm{int}})}^{(k)}:=\bbE_{W\sim\haar} \left[ F_{\caE_{\scrN}(W^{\dagger}HW,\caE_{\mathrm{int}})}^{(k)} \right].
\end{equation}
In the perfect time-filter limit \(T\to\infty\), we have
\begin{equation}
    \overline{F}_{\caE_{\scrN}(H,\caE_{\mathrm{int}})}^{(k)}
    =
    k!+\sum_{\ell=1}^{k}\frac{(k!)^2F_{\caE_{\mathrm{int}}}^{(\ell)}}{(k-\ell)!D^{2\ell}}+o_{D\to\infty}(1),
    \label{eq:oneH_main_result}
\end{equation}
where $o_{D\to\infty}(1)$ vanishes in the large-$D$ limit for every fixed $k$.
\end{theorem}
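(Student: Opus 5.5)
Work in the energy eigenbasis \(H=\sum_\alpha E_\alpha|\alpha\rangle\langle\alpha|\), with \(W\) absorbed into the intermediate unitaries: \(\tilde U=WUW^\dagger\), \(\tilde V=WVW^\dagger\). Expanding the \(2k\)-th power gives
\begin{equation}
\Bigl|\tr\bigl(\tilde V^\dagger e^{-\rmi H\delta t_2}\tilde U e^{-\rmi H\delta t_1}\bigr)\Bigr|^{2k}
=\sum_{\bfa,\bfb,\bfa',\bfb'} e^{-\rmi\delta t_2\sum_r(E_{b_r}-E_{b'_r})}\,e^{-\rmi\delta t_1\sum_r(E_{a_r}-E_{a'_r})}\prod_{r}\tilde V^\dagger_{a_rb_r}\tilde U_{b_ra_r}\overline{\tilde V^\dagger_{a'_rb'_r}\tilde U_{b'_ra'_r}} .
\end{equation}
In the limit \(T\to\infty\), the averages over \(\delta t_1\) and \(\delta t_2\) keep only the energy-conserving strings. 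By the additive nonresonance condition (Definition~\ref{def:k_nonresonance}) these are exactly \(\bfa'=\sigma(\bfa)\) and \(\bfb'=\tau(\bfb)\) for permutations \(\sigma,\tau\in S_k\). Coinciding indices cause overcounting, which I would handle by inclusion–exclusion. Those coincident-index terms carry at least one fewer free index sum, so I expect them to be \(o(1)\) after the Haar average; I will verify this at the end.

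The time-averaged quantity is then
\begin{equation}
\sum_{\sigma,\tau}\tr\bigl[(\tilde U\otimes\bar{\tilde U})^{\otimes k}\,\Delta_{\sigma,\tau}\,(\tilde V^\dagger\otimes\bar{\tilde V}^{\dagger})^{\otimes k}\,\Delta'\bigr],
\end{equation}
where \(\Delta\) denotes the dephasing projectors onto diagonal index strings. I next take the expectation over \(W\sim\haar\), which acts on \(\tilde U^{\otimes k}\otimes\bar{\tilde U}^{\otimes k}\) and the corresponding \(\tilde V\) factors together. Since \(U\) and \(V\) enter only in the combinations \(U\cdot\) and \(V^\dagger\cdot\) conjugated by the same \(W\), Weingarten calculus for \(\bbE_W (W\otimes\bar W)^{\otimes 2k}\) reduces the result to polynomials in the invariants \(\tr(\cdot)\) of words in \(U\) and \(V^\dagger\). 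By independence of \(U\) and \(V\), the dominant invariants should be products \(|\tr(U^\dagger V)|^{2\ell}\)-type quantities, i.e. \(F^{(\ell)}_{\caE_{\mathrm{int}}}\), paired with products of \(|\tr U|^2\) and \(|\tr V|^2\).

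The leading-order bookkeeping goes as follows.
\begin{itemize}
\item[(i)] The sector where both \(\sigma\) and \(\tau\) act trivially relative to each other, with the Haar-identity Weingarten contraction, gives the \(k!\) term. This is the same mechanism as for \(\caE_{\scrL}\).
\item[(ii)] A term with \(\ell\) replica pairs "locked" by the Weingarten permutation to the trace \(\tr(U^\dagger V)\), and the remaining \(k-\ell\) pairs free, contributes \(F^{(\ell)}_{\caE_{\mathrm{int}}}/D^{2\ell}\). The multiplicity is \((k!)^2/(k-\ell)!\): choose the \(\ell\) locked replicas with ordering on both sides, times the \(k!\) free permutations of the rest. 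I will count this explicitly as \(\binom{k}{\ell}^2(\ell!)^2(k-\ell)!\).
\item[(iii)] All other contractions involve factors like \(\tr(U)\) or \(\tr(U^2)\) with extra powers of \(D^{-1}\). I bound these by Cauchy–Schwarz/Hölder against \(\bbE|\tr U|^{4k}=\caO_k(D^{2k-\epsilon})\). Each such term then carries at least \(D^{-\epsilon/2}\) relative to the order-one terms, so it is \(o(1)\).
\end{itemize}
For the Weingarten function I use \(\mathrm{Wg}(\pi,D)=D^{-2k}(\delta_{\pi,e}+\caO(D^{-1}))\) and keep only leading terms, again controlling cross terms by the trace-suppression bound.

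The main obstacle is step (iii) combined with the power counting. I have to show that every non-leading Weingarten/permutation pair produces a net negative power of \(D\) after accounting for the free index sums, which can give up to \(D^{2k}\). This must hold uniformly even though \(U\) is not Haar, so only the single moment bound \(\eref{eq:small_tr_ensemble}\) is available. My approach is to express each term as a product of traces of cycles in \(U\) and \(V\), and use \(|\tr(\text{word})|\le D\) for cycles containing \(U^\dagger U\)-free words. I apply Hölder only on cycles that consist of a single \(U\) or \(V\). I would then check that the genus-type count \(\#\text{cycles}-2k\) is maximized exactly by the configurations in (i)–(ii).
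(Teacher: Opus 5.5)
\textbf{Overall.} Your plan follows the paper's proof step for step, but one step fails as written: the degree of the Haar moment and hence the Weingarten scaling. The shared structure is as follows.
\begin{enumerate}
\item Nonresonance reduces the time average to sectors $(\sigma,\tau)\in S_k^2$.
\item The eigenbasis average is a Weingarten expansion.
\item The order-one terms are fixed-point-free pairings $\alpha$ with $\beta=\alpha$ and perfect index matching.
\item Pairings with fixed points are suppressed by H\"older against $\bbE|\tr U|^{4k}$, and those with longer cycles by cycle counting (the paper's \pref{prp:C_leading_order}).
\item Your multiplicity $\binom{k}{\ell}^2(\ell!)^2(k-\ell)!=(k!)^2/(k-\ell)!$ equals the paper's coefficient $\sum_{\pi,\sigma}\binom{|\mathrm{Fix}(\sigma\pi^{-1})|}{k-\ell}$, which the paper evaluates through \lref{lem:SUMIDENTITY}.
\end{enumerate}
Handling coincident indices by inclusion--exclusion instead of the exact weights $1/(\rmS(\bfm)\rmS(\bfn))$ is a harmless variation. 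Both change the matched index sum only by $\caO_k(D^{2k-1})$.

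\textbf{The Weingarten degree.} Each of the $4k$ entries $\tilde U_{b_ra_r}$, $\tilde V^\dagger_{a_rb_r}$ and their complex conjugates carries one $W$ and one $\bar W$. You therefore need the moment of degree $4k$ in $W$ and $4k$ in $\bar W$, i.e.\ the $4k$-fold twirl. The paper writes the conjugated entries as entries of $U^\dagger$ and $V$ and uses $\Phi_{4k}$ with $\alpha,\beta\in S_{4k}$, for which $\mathrm{Wg}^{(4k)}_D(\alpha^{-1}\beta)=\caO_k(D^{-4k-|\alpha^{-1}\beta|})$. With your stated $\mathrm{Wg}=D^{-2k}(\delta_{\pi,e}+\caO(D^{-1}))$, the leading pairings would scale as $D^{-2k}\cdot D^{2k}\cdot D^{2k}=D^{2k}$ (Weingarten factor, free index sums, cycle traces). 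That contradicts your own claimed contributions $F^{(\ell)}/D^{2\ell}$. With $D^{-4k}$ the net exponent becomes $\#\mathrm{cyc}(\alpha)-2k$, which is exactly the genus-type count you propose to maximize. So the fix is to replace $D^{-2k}$ by $D^{-4k}$; the rest of your bookkeeping is consistent with that.

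\textbf{Smaller points.}
\begin{enumerate}
\item In the leading terms the unlocked replicas come with $\tr(U^\dagger U)\tr(V^\dagger V)=D^2$, not with $|\tr U|^2|\tr V|^2$. The latter are exactly the fixed-point contributions you rightly discard in (iii).
\item H\"older with $s$ fixed points yields only a suppression $D^{-\epsilon s/(4k)}$, e.g.\ $D^{-\epsilon/(2k)}$ for $s=2$, not $D^{-\epsilon/2}$. This still suffices at fixed $k$.
\item The multiplicity in (ii) must come from the index-matching rule, which you assert rather than derive.
\begin{itemize}
\item Contracting $\tilde U$ with $\tilde V^\dagger$ inside a replica, and likewise their conjugates, gives $|\tr(V^\dagger U)|^2$. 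This is always compatible with the sector constraints.
\item Contracting $\tilde U$ with $\bar{\tilde U}$ and $\tilde V^\dagger$ with $\bar{\tilde V}^\dagger$ gives $D^2$. This is compatible only if $\sigma^{-1}(r)=\tau^{-1}(r)$, the paper's condition $r\in\mathrm{Fix}(\sigma\pi^{-1})$.
\end{itemize}
Your formula is a correct enumeration of that set: choose the locked set $L$, the common image $\sigma^{-1}(L)=\tau^{-1}(L)$, two independent bijections on $L$, and one common bijection on the complement. However, the \emph{$k!$ free permutations of the rest} in your description should be $(k-\ell)!$.
\end{enumerate}
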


\Eref{eq:oneH_main_result} directly relates the frame potentials of $\caE_{\scrN}$ and $\caE_{\mathrm{int}}$, with the $\ell$-th contribution suppressed by $D^{-2\ell}$. Hence $\caE_{\scrN}$ approaches the Haar value $k!$ in the large-$D$ limit, whenever the intermediate ensemble satisfies the following criterion.

\begin{definition}[Frame-potential suppression criterion]
\label{def:intermediate_frame_potential_criterion}
An  ensemble \(\caE\) is said to satisfy the $k$-th order frame-potential suppression criterion if 
\begin{equation}\label{eq:small_frame_int}
    F_{\caE}^{(k)} =\caO_k( D^{2k-\epsilon}),
    \qquad
    \epsilon=\Omega_k(1).
\end{equation}
\end{definition}
If an ensemble $\caE$ satisfies the $k$-th order frame-potential suppression criterion, then it also satisfies the corresponding $\ell$-th order criterion for all $1\le \ell\le k$, as stated and proved in SM \sref{sec:app_prove_thm_main}. Thus \Eref{eq:small_frame_int} provides a simple sufficient condition on the intermediate ensemble: its frame-potential growth must be sufficiently suppressed relative to the maximal scaling $D^{2\ell}$. For group-based ensembles, such as the Pauli and Clifford groups, the frame-potential suppression criterion is guaranteed by the trace-suppression condition thanks to the translation invariance, \begin{equation} F_{\rmG}^{(\ell)} = \bbE_{U,V\sim\rmG} \left|\tr(U^\dagger V)\right|^{2\ell} = \bbE_{V\sim\rmG} \left|\tr(V)\right|^{2\ell}. \end{equation}

Moreover, \thref{thm:main_oneH_int} implies typicality: for almost any
Hamiltonian satisfying the $k$-th additive nonresonance condition,
$F_{\caE_{\scrN}}^{(k)}\to k!$ whenever the intermediate ensemble satisfies
\eref{eq:small_frame_int}, as proved in \aref{sec:appTYPICAL}.
Thus our protocol generates unitary $k$-designs from a broad class of
experimentally accessible chaotic many-body dynamics, as illustrated in
\aref{app:physical}.  Although the theorem is derived in the perfect
time-filter limit, finite sampling windows are the relevant experimental
setting.  The finite-\(T\) analysis in
\aref{app:finite_time_convergence}, consistent with
\rscite{Zhou2026three,sun2026unitary}, shows that the perturbed
one-Hamiltonian protocols reach the Haar-level regime within modest
sampling windows, supporting their experimental feasibility.

\emph{Design-forming intermediate ensembles}---We now present several
standard choices of the intermediate ensemble \(\caE_{\mathrm{int}}\)
that satisfy the above sufficient criterion.  In each case, the 
ensemble \(\caE_{\scrN}\) approaches a unitary \(k\)-design in the
perfect time-filter and large-\(D\) limits.

\begin{figure}[t]
    \centering
    \includegraphics[width=0.48\textwidth]{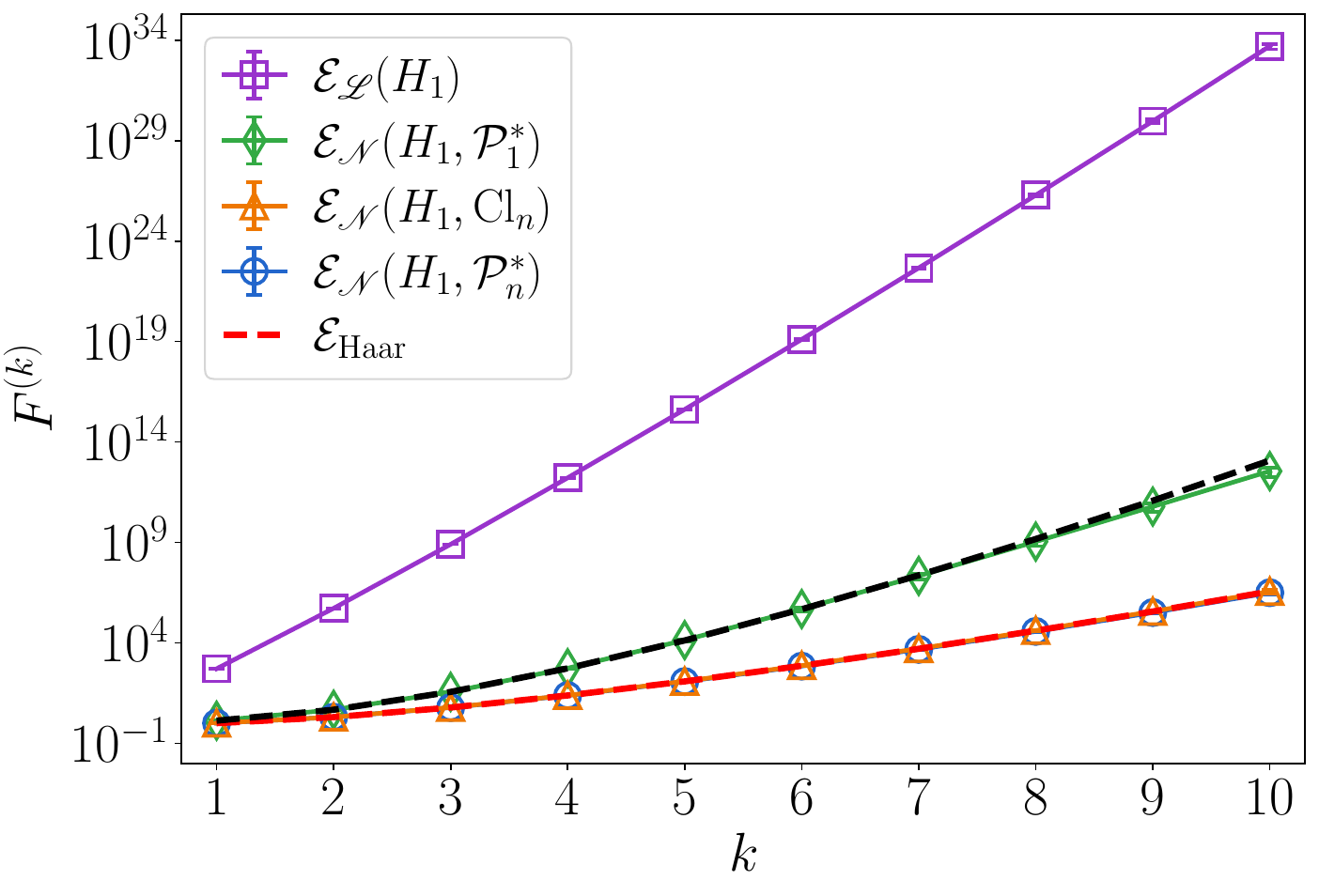}
\caption{Frame potential $F^{(k)}$ as a function of $k$ for
relevant ensembles. Here $H_1$ is a chaotic Hamiltonian on $n=9$ qubits, sampled once from the GUE and then fixed throughout the protocol.
The evolution times are sampled uniformly from $[0,T]$, with $T=10^{8}$ for $\caE_{\scrL}(H_1)$ and $\caE_{\scrN}(H_1,\caP_1^*)$, and $T=10^3$ for $\caE_{\scrN}(H_1,\Cl_n)$ and $\caE_{\scrN}(H_1,\caP_n^*)$. Each point is averaged over $10^7$ random samples.
The black dashed lines show the analytical prediction for
$\caE_{\scrN}(H_1,\caP_1^*)$ from
\eqsref{eq:oneH_main_result}{eq:local_pauli_frame}.
The data points for
$\caE_{\scrN}(H_1,\caP_n^*)$ and
$\caE_{\scrN}(H_1,\Cl_n)$
coincide with the Haar values.
}
\label{fig:Ensembles_demonstration}
\end{figure}

The nontrivial Pauli set $\caP_n^*$ provides a simple example of an
intermediate ensemble.  The trace-suppression condition in
\eref{eq:small_tr_ensemble} is satisfied for $\caP_n^*$, since all its
elements are traceless.   Moreover,
Hilbert--Schmidt orthogonality gives
\begin{equation}
    F_{\caP_n^*}^{(\ell)}
    =
    \frac{D^{2\ell}}{|\caP_n^*|}
    =
    \frac{D^{2\ell}}{D^2-1}
    \approx D^{2\ell-2}
    \ll D^{2\ell}.
    \label{eq:global_Pauli_frame}
\end{equation}
Thus, $\caP_n^*$ supplies sufficient intermediate randomness for
$\caE_{\scrN}$ to approach a unitary $k$-design.  As shown in
\fref{fig:Ensembles_demonstration}, in the perfect time-filter limit
$\caE_{\scrN}(H,\caP_n^*)$ approaches the Haar frame potential. 
This protocol is structurally similar to $\caE_{\scrM}$ in
\rcite{sun2026unitary}.  The key difference is that $\caE_{\scrN}$ uses
the same Hamiltonian evolution setting on both sides of the Pauli pulse,
whereas $\caE_{\scrM}$ requires two independently controlled Hamiltonian
evolution settings.  This reduces the required Hamiltonian control and
makes the protocol more feasible for analog many-body platforms.

Another useful example is the $n$-qubit Clifford group $\Cl_n$.  Its
design properties are well understood: $\Cl_n$ is a unitary $3$-design
but not a unitary $4$-design
\cite{Webb2016,zhu2016clifford,Zhu2017}, and its higher-order frame
potentials are known exactly.  For $n\ge \ell-1$ \cite{Gross2021},
\begin{equation}
    F_{\Cl_n}^{(\ell)}
    =
    \prod_{j=0}^{\ell-2}(2^j+1).
    \label{eq:clifford_frame}
\end{equation}
Although this grows rapidly with $\ell$, it is independent of
$D=2^n$.  Hence, for fixed $k$, $F_{\Cl_n}^{(k)}=\caO_k(1)$, satisfying
the frame-potential suppression criterion in
\eref{eq:small_frame_int}.  By the group-translation argument above,
the trace-suppression condition in \eref{eq:small_tr_ensemble} is also
guaranteed.  Thus $\Cl_n$ provides another design-forming intermediate
ensemble, as confirmed numerically in
\fref{fig:Ensembles_demonstration}.

\emph{Design-limiting intermediate ensembles}---We now turn to examples where the intermediate ensemble does not provide enough randomness to satisfy the design criterion.

The simplest intermediate ensemble is the identity layer,
$\caE_{\mathrm{int}}=\{I^{\otimes n}\}$, for which $\caE_{\scrN}$ reduces to
the purely temporal ensemble $\caE_{\scrL}$ generated by a single
Hamiltonian.  Note that this choice does not satisfy the trace-suppression
condition, since $\tr (I^{\otimes n})=D$, and hence the conclusion of
\thref{thm:main_oneH_int} does not apply. The following result is proved in
SM \sref{sm:SMFrameLH}.
\begin{align}
    F^{(k)}_{\caE_{\scrL}(H)}
    &=
    \sum_{\substack{\gamma_1+\cdots+\gamma_D=k\\ \gamma_i\ge0}}
    \left(
        \frac{k!}{\gamma_1!\cdots \gamma_D!}
    \right)^2\notag\\
    &=
    k!D^k+\caO_k(D^{k-1}),
    \label{eq:identity_intermediate_frame_potential}
\end{align}
which reproduces the temporal-ensemble result of
\rcite{Roberts2017chaos} and is $D^k$ times larger than the Haar value $k!$, as illustrated in
\fref{fig:Ensembles_demonstration}. Without any perturbation, the time evolution under a fixed Hamiltonian only
randomizes phases in the energy basis and remains far from Haar
randomness.

Another example is the local nontrivial Pauli ensemble
$\caP_{w}^*$ ($w\le n$), consisting of non-identity Pauli strings on a $w$-qubit
subsystem, tensored with the identity on the complement.  Although every
element is traceless and hence satisfies the trace-suppression condition
in \eref{eq:small_tr_ensemble}, its frame potential reads
\begin{equation}
    F^{(\ell)}_{\caP_w^*}
    =
    \frac{D^{2\ell}}{|\caP_w^*|}
    =
    \frac{D^{2\ell}}{2^{2w}-1}.
    \label{eq:local_pauli_frame}
\end{equation}
For fixed $\ell\le k$ and fixed $w$ independent of $n$, this gives
$F^{(\ell)}_{\caP_w^*}=\Theta_w(D^{2\ell})$.  Thus $\caP_w^*$ fails the
frame-potential suppression criterion in \eref{eq:small_frame_int}, and
does not provide sufficient intermediate randomness for design
generation.  This behavior is observed numerically for $w=1$ in the
$n=9$ qubit system, as shown in \fref{fig:Ensembles_demonstration}. In fact, for any finite
 unitary set $\rmS$, we have
\begin{equation}
    F_{\rmS}^{(\ell)}
    =
    \frac{1}{|\rmS|^2}
    \sum_{U,V\in\rmS}
    |\tr(U^\dagger V)|^{2\ell}
    \ge
    \frac{D^{2\ell}}{|\rmS|}.
    \label{eq:finite_ensemble_lower_bound}
\end{equation}
Hence, if $|\rmS|$ does not grow as a positive power of $D$, its frame
potential remains of order $D^{2\ell}$ up to subpolynomial factors and
cannot satisfy \eref{eq:small_frame_int}.  The identity layer and the
local Pauli ensemble are simple examples of small finite
ensembles that are too limited to serve as design-forming intermediate
ensembles.

\emph{Equivalent ensemble constructions}---Beyond its direct application to the one-Hamiltonian protocol, \thref{thm:main_oneH_int} also leads to a simple protocol-independent observation. A worst-case intermediate ensemble is a deterministic traceless perturbation, $\tilde{\caE}=\{U_0\}$, with $\tr U_0=0$. Although it satisfies \eref{eq:small_tr_ensemble}, it has the maximal frame potential $F_{\tilde{\caE}}^{(\ell)}=D^{2\ell}$ and therefore violates \eref{eq:small_frame_int}. Substituting this into \thref{thm:main_oneH_int} gives 
\begin{align} \overline{F}_{\caE_{\scrN}(H,\tilde{\caE})}^{(k)} &= k!+\sum_{\ell=1}^{k}\frac{(k!)^2}{(k-\ell)!} +o_{D\to\infty}(1) \notag\\ &\ll  F^{(k)}_{\caE_{\scrL}(H)}.\label{eq:fixed_traceless_frame} 
\end{align} 
 Thus, a fixed low-trace perturbation is still insufficient for Haar-level design generation, but it already produces a parametrically strong reduction of the frame potential, from order $D^k$ to order $\caO_k(1)$.

\begin{figure*}[t]
    \centering
    \includegraphics[width=0.95\textwidth]{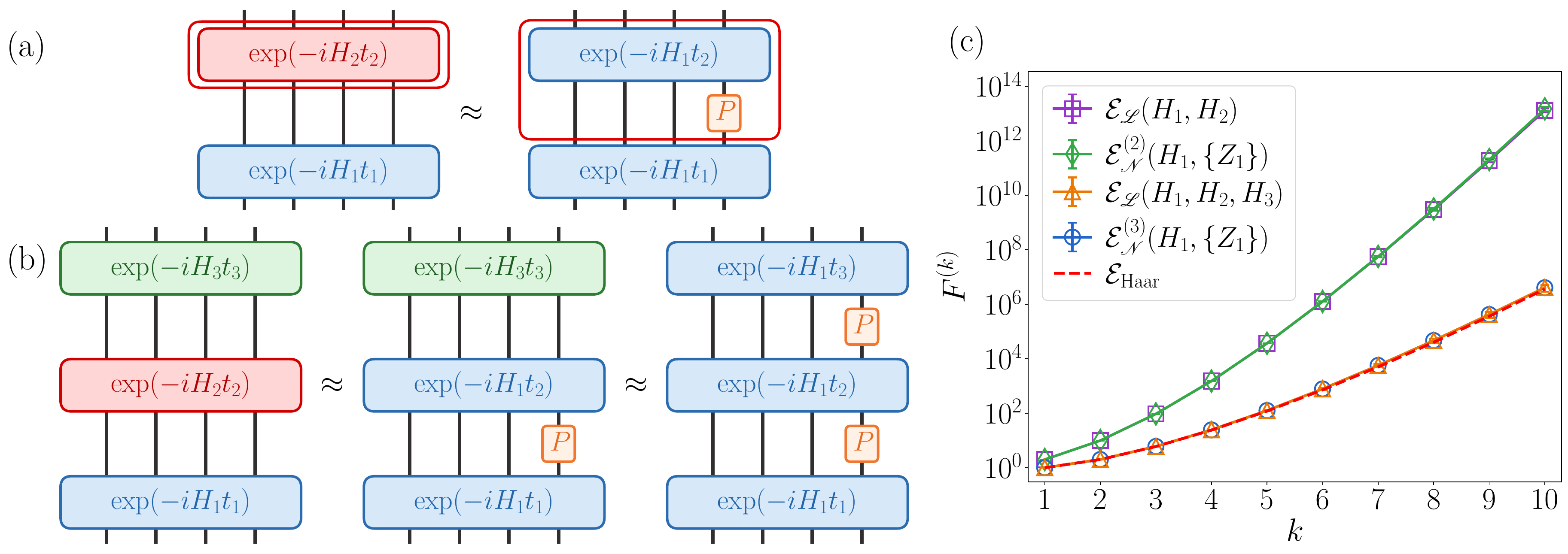}

\caption{Equivalent constructions of unitary ensembles generated by chaotic Hamiltonian dynamics.
The evolution times $t_1,t_2,t_3$ are sampled uniformly from $[0,T]$ in the long-time limit $T\to\infty$, and $P\in\{X,Y,Z\}$ denotes a fixed non-identity Pauli operator acting on the first qubit.
(a) Elementary equivalence
$\caE_{\scrL}(H_1,H_2)\approx \caE_{\scrN}(H_1,\{P\})$,
as given by \eref{eq:recursive_2sp}.
(b) Recursive application of this replacement, which maps the three-Hamiltonian temporal protocol
$\caE_{\scrL}(H_1,H_2,H_3)$
to the one-Hamiltonian construction
$\caE_{\scrN}^{(3)}(H_1,\{P\})$.
(c) Frame potential $F^{(k)}$ as a function of $k$ for
relevant ensembles. The chaotic Hamiltonians $H_1,H_2,H_3$ act on $n=9$ qubits; after being sampled from the GUE, they are fixed throughout the protocol. The evolution times are sampled uniformly from $[0,T]$, with $T=10^{8}$ for $\caE_{\scrL}(H_1,H_2)$ and $\caE^{(2)}_{\scrN}(H_1,\{Z_1\})$, and $T=10^6$ for $\caE_{\scrL}(H_1,H_2,H_3)$ and $\caE^{(3)}_{\scrN}(H_1,\{Z_1\})$. Each point is averaged over $10^7$ random samples. The agreement between
$\caE_{\scrL}(H_1,H_2)$ and
$\caE^{(2)}_{\scrN}(H_1,\{Z_1\})$
confirms the two-step equivalence in \eref{eq:recursive_2sp}, while the agreement of
$\caE_{\scrL}(H_1,H_2,H_3)$ and
$\caE^{(3)}_{\scrN}(H_1,\{Z_1\})$
with the Haar value confirms the recursive construction in
\eref{eq:recursive_m_round_equivalence}.
}\label{fig:equivalent_construction}
\end{figure*}

Surprisingly, the frame potential in \eref{eq:fixed_traceless_frame} matches
that of the two-step Hamiltonian protocol
$\caE_{\mathrm{2SP}}=\caE_{\scrL}(H_1,H_2)$ in
\rcite{Zhou2026three}.  Hence, in the long-time and large-$D$ limits,
the effect of a second independent chaotic Hamiltonian can be reproduced,
at the level of the $k$-th frame potential, by inserting a fixed
trace-suppressed perturbation:
\begin{equation}
    \caE_{\scrL}(H_1,H_2)
    \approx
    \caE_{\scrN}(H_1,\{U_0\}).
    \label{eq:recursive_2sp}
\end{equation}
This gives an elementary replacement rule for Hamiltonian-based
protocols.  Iterating \eref{eq:recursive_2sp}, one obtains a
one-Hamiltonian recursive construction at the level of frame potentials:
\begin{equation}
    \caE_{\scrL}(H_1,H_2,\ldots,H_m)
    \approx
    \caE^{(m)}_{\scrN}(H_1,U_0),
    \label{eq:recursive_m_round_equivalence}
\end{equation}
where
\begin{equation}
\caE^{(m)}_{\scrN}(H_1,U_0)
:=
\left\{
\prod_{i=1}^{m-1}\left(
\rme^{-\rmi H_1 t_i}U_0
\right)
\rme^{-\rmi H_1 t_m}
\right\},
\label{eq:recursive_oneH_protocol}
\end{equation}
with $\caE^{(2)}_{\scrN}(H_1,U_0)=\caE_{\scrN}(H_1,\{U_0\})$. $U_0$ is a fixed traceless perturbation, such as a
local non-identity Pauli operator, as illustrated in subplots (a) and (b) in \fref{fig:equivalent_construction}.  Thus, additional
independent Hamiltonian evolutions can be replaced by a fixed traceless
perturbation without changing the  frame-potential structure, as numerically verified in subplot (c) in \fref{fig:equivalent_construction}.

The three-Hamiltonian protocol $\caE_{\scrL}(H_1,H_2,H_3)$ was recently
shown to generate unitary $k$-designs
\cite{Zhou2026three}.  Our result shows that, at the level of frame
potentials, the same design behavior can be reproduced using only a
single chaotic Hamiltonian, provided that the evolution is interleaved
with a fixed traceless unitary perturbation.  Thus independent Hamiltonian controls
can be traded for a simple fixed perturbation, which substantially
reduces the control requirements for realizing random unitary dynamics,
especially on NISQ devices.

The recursive construction also suggests a trajectory-level analogue
of classical sensitivity to perturbations. Although a local Pauli
pulse is simple in control complexity, it can redirect the subsequent
chaotic evolution toward trajectories that become well separated
under physically motivated distance measures \cite{pdistance}. In
quantum many-body systems, this picture can be formulated using process
tensors: intermediate perturbations generate distinct histories in a
butterfly space, and under chaotic dynamics the corresponding unitary
trajectories become increasingly orthogonal as the number of
perturbations grows \cite{Dowling2024operation,Donovan2026diagnos}.
For the temporal ensembles considered here, this distinguishability
is captured by the frame potential, which measures the overlap between
unitary evolutions generated by different perturbation histories. Thus
$\caE_{\scrN}^{(m)}(H_1,U_0)$ turns intermediate perturbations from
probes of chaos into catalysts that release the restricted temporal
orbit of a single Hamiltonian and drive it toward unitary-design
formation. The second-order additive nonresonance condition dates to
von Neumann's quantum $H$-theorem, while its higher-order extensions
characterize quantum ergodicity and mixing
\cite{von2010proof,zhang2016ergodicity}; whether the present mechanism
admits a quantitative interpretation as quantum Bernoulli dynamics
remains open \cite{random}.

\emph{Summary}---In this work, we showed that unitary $k$-designs can be generated from a single chaotic Hamiltonian when the evolution is interrupted by suitable intermediate unitary perturbations. We derived a general relation between the frame potential of the resulting ensemble and that of the intermediate ensemble, identifying a broad universality class of admissible perturbations. In particular, the intermediate ensemble need not itself approximate a design but
only exhibit sufficiently suppressed frame-potential growth. Nontrivial Pauli and Clifford ensembles provide representative examples, while deterministic perturbations or fixed-size local ensembles lack sufficient randomness in the large-system limit. We also
established a recursive equivalence between multi-Hamiltonian temporal
protocols and one-Hamiltonian protocols interleaved with fixed
traceless perturbations, thereby replacing additional Hamiltonian
controls with simple intermediate operations. Dynamically, these perturbations break the restricted phase-torus motion of a single Hamiltonian evolution and allow more efficient exploration of the unitary group.

Several directions remain open.  It would be useful to analyze the
robustness of our protocol to symmetries, near-degeneracy, and experimental noise, and to sharpen the classification of the intermediate ensembles in realistic many-body systems. Another natural direction is to establish quantitative connections between the frame potentials of the equivalent constructions and chaos diagnostics, including dynamical entropy, process-tensor spatiotemporal entanglement, and higher-order OTOCs \cite{Roberts2017chaos,Dowling2024operation,Donovan2026diagnos}.  Experimental validation of our protocol on quantum simulation platforms
is a natural next step, along with applications to quantum information
tasks that benefit from higher-order unitary designs, such as classical
shadow estimation, R\'enyi-entropy measurement, and randomized
benchmarking
\cite{HuangKuengPreskill2020,helsen2023thrifty,chen2024nonstab,yang2025high,Dankert2009exact,elben2018renyi,vermersch2018unitary,Nakata2021quantum}.

\emph{Note added:} Shortly before the initial arXiv posting of this manuscript, \rcite{nandy2026unitary} appeared and independently studied one- and two-Pauli-kick protocols. Its GUE one-kick result agrees with \eref{eq:fixed_traceless_frame}, while its two-kick protocol coincides with the fixed-Pauli example in \fref{fig:equivalent_construction}. The two works were developed independently and contemporaneously. \rcite{nandy2026unitary} also examines finite-temperature frame potentials and SYK models. Our ensemble-level analysis further shows that, in the long-time and large-$D$ limits, the two-kick construction applies to any fixed traceless unitary perturbation, not only Pauli kicks.

\emph{Acknowledgements}---This work was supported by the National Natural Science Foundation of China (92365202, 12475011, 11921005), 
the National Key R\&D Program of China (2024YFA1409002), 
the Shanghai Municipal Science and Technology Major Project (2019SHZDXZX01), 
the Shanghai Municipal Science and Technology Project (2SLZ601100), 
the Innovation Program for Quantum Science and Technology (2021ZD0302100).

\let\oldaddcontentsline\addcontentsline
\renewcommand{\addcontentsline}[3]{}
\bibliography{ref}
\let\addcontentsline\oldaddcontentsline
\clearpage

\newpage
\let\oldaddcontentsline\addcontentsline
\renewcommand{\addcontentsline}[3]{}
\appendix
\section{Typicality of \thref{thm:main_oneH_int}}\label{sec:appTYPICAL}
Although \thref{thm:main_oneH_int} is stated after averaging over the
eigenbasis of the Hamiltonian, for an intermediate ensemble satisfying  \eqsref{eq:small_tr_ensemble}{eq:small_frame_int}, this averaged statement already implies
typicality with respect to the random-eigenbasis ensemble.  The key
point is that the Haar frame potential is a universal lower bound.
Therefore, for each fixed eigenbasis $W$, we may define the deviation $\Delta_W^{(k)}$ from the Haar value for a specific basis as
\begin{equation}
    \Delta_W^{(k)}
    :=
    F_{\caE_{\scrN}(W^\dagger H W,\caE_{\mathrm{int}})}^{(k)}
    -
    k! \ge 0 .
\end{equation}
According to \eref{eq:oneH_main_result}, the averaged excess satisfies
\begin{equation}
    \bbE_W \Delta_W^{(k)}=o_{D\to\infty}(1) .
\end{equation}
Since $\Delta_W^{(k)}$ is nonnegative, a small average excess cannot be
caused by cancellations between different eigenbases.  Markov's
inequality directly gives, for any fixed $\eta>0$,
\begin{equation}
    \Pr_W\left[
    \Delta_W^{(k)}>\eta
    \right]
    \le
    \frac{\bbE_W\Delta_W^{(k)}}{\eta}
    =
   o_{D\to\infty}(1).
\end{equation}
Thus, for fixed $k$ and fixed accuracy threshold $\eta$, the probability
that a randomly chosen eigenbasis produces an $\eta$-large deviation
from the Haar value vanishes as $D\to\infty$.

Consequently, the conclusion of \thref{thm:main_oneH_int} is not merely
an averaged statement over eigenbases.  Rather, a typical eigenbasis
already yields a one-Hamiltonian ensemble whose $k$-th frame potential
approaches the Haar value with high probability.  In this sense, the
average over $W$ should be understood as a convenient way to prove
typicality: because the Haar value is the absolute lower bound, reaching
it on average up to $o_{D\to\infty}(1)$ forces almost all eigenbases to be
close to Haar at the level of frame potentials.

\section{Physically accessible dynamics}\label{app:physical}
\begin{figure}[t]
    \centering
    \includegraphics[width=0.48\textwidth]{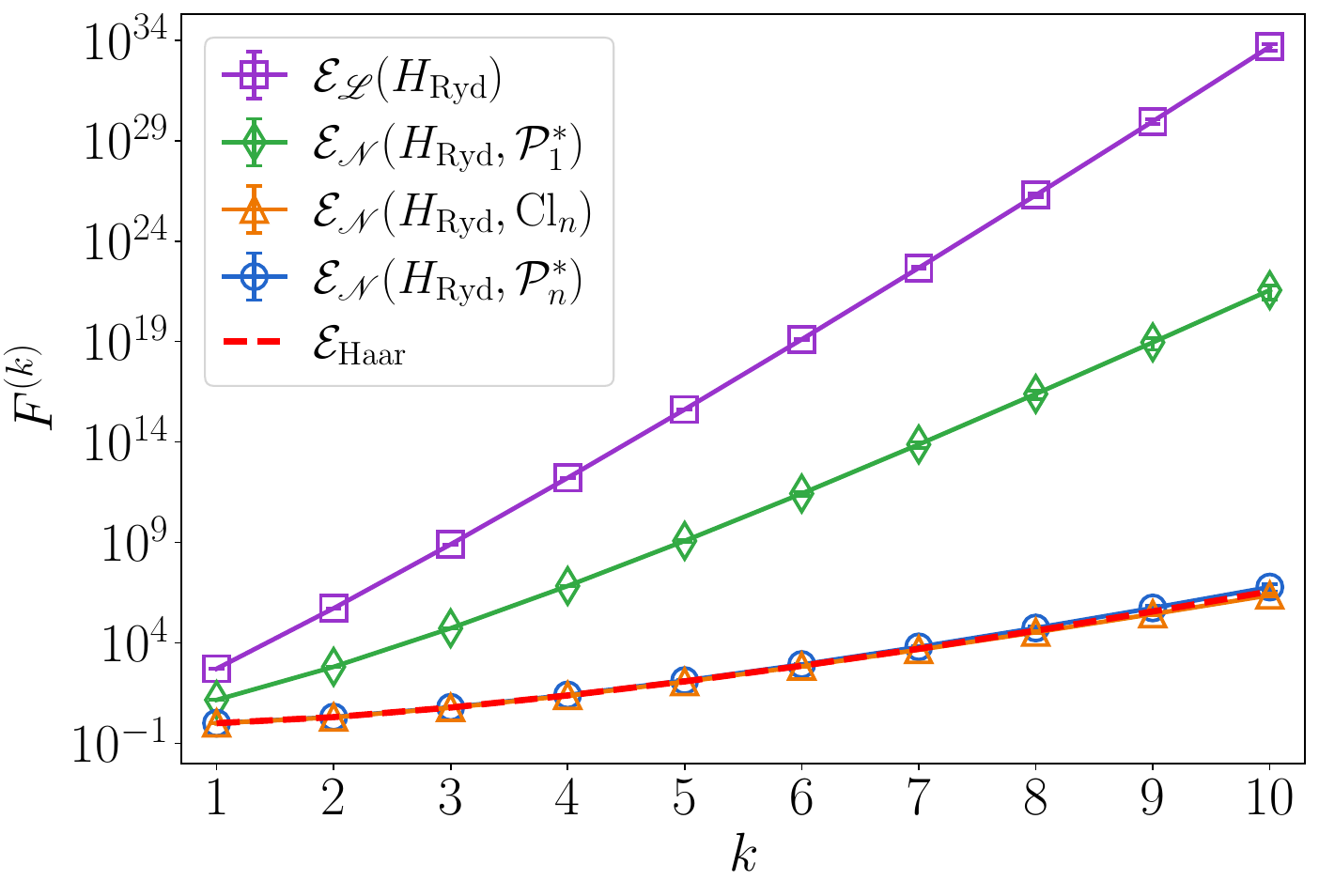}

\caption{Frame potential $F^{(k)}$ as a function of $k$ for the relevant ensembles.
Here $H_{\mathrm{Ryd}}$ is the Rydberg Hamiltonian in
\eref{eq:rydHamiltonian}, acting on $n=9$ qubits, with
$\Omega_i=\Omega_0(1+\epsilon_i)$ and
$\Delta_i=\Delta_0(1+\eta_i)$.
The inhomogeneities $\epsilon_i$ and $\eta_i$ are sampled independently
and uniformly from $[-0.12,0.12]$ and $[-0.35,0.35]$, respectively.
The evolution times are sampled uniformly from $[0,T]$ with
$T=10^{10}$, which approximates the perfect time-filter limit.
For each realization of $H_{\mathrm{Ryd}}$, every data point is averaged
over $10^7$ random samples of the corresponding unitary ensemble;
the final data are further averaged over 20 independent realizations of
$H_{\mathrm{Ryd}}$.
The data points for
$\caE_{\scrN}(H_{\mathrm{Ryd}},\caP_n^*)$ and
$\caE_{\scrN}(H_{\mathrm{Ryd}},\Cl_n)$
coincide with the Haar values.
}
\label{fig:Rydberg_ensembles_comparison}
\end{figure}
In the main text, most analytical and numerical results are obtained for
GUE Hamiltonians, which provide a clean model of generic chaotic
dynamics but are difficult to implement directly in current experimental
platforms. While similar Hamiltonian-based protocols have also been studied with SYK-type chaotic models \cite{Zhou2026realizing,Zhou2026three,Pikulin2017black,Asaduzzaman2024syk,Cao2020Towards}, here we focus on a Rydberg-atom model that is more directly relevant to near-term platforms \cite{Labuhn2016Rydberg,Marcuzzi2017facilitation,Oliveira2025demo}. We consider the Hamiltonian 
\begin{equation}\label{eq:rydHamiltonian}
    H_{\mathrm{Ryd}}
    =
    \sum_i \frac{\Omega_i}{2}X_i
    -
    \sum_i \Delta_i \hat n_i
    +
    \sum_{i<j} V_{ij} \hat n_i\hat n_j ,
\end{equation}
where $\hat n_i$ is the Rydberg occupation operator, and
$V_{ij}=V_0/|r_i-r_j|^6$ describes the van der Waals interaction between
Rydberg excitations.  We take
$\Omega_i=\Omega_0(1+\epsilon_i)$ and
$\Delta_i=\Delta_0(1+\eta_i)$, where the weak random inhomogeneities
$\epsilon_i$ and $\eta_i$ are introduced to break spatial symmetries and
suppress accidental many-body resonances.  Driven Rydberg arrays naturally realize this model through laser-induced Rabi drives, tunable detunings, and van der Waals density-density interactions, with weak inhomogeneities accessible through site-dependent controls and atom positions. As shown in
\fref{fig:Rydberg_ensembles_comparison}, we numerically study an
$n=9$ qubit system governed by the Hamiltonian in
\eref{eq:rydHamiltonian}.  We set $\Omega_0=\Delta_0=1$ and $V_0=1.5$,
and sample the inhomogeneities independently and uniformly from
$\epsilon_i\in[-0.12,0.12]$ and
$\eta_i\in[-0.35,0.35]$.

As in the GUE-Hamiltonian case, the full nontrivial Pauli ensemble
$\caP_n^*$ and the Clifford ensemble $\Cl_n$ act as effective
design-forming intermediate ensembles: when inserted into
$\caE_{\scrN}(H_{\mathrm{Ryd}},\caE_{\mathrm{int}})$, they drive the
resulting one-Hamiltonian protocol to Haar-level frame potentials.
This behavior is consistent with their trace- and frame-potential
suppression properties.  By contrast, the local Pauli ensemble provides
additional randomness relative to the unperturbed temporal ensemble, but
remains far from sufficient to generate a unitary $k$-design.

\begin{figure*}[t]
    \centering
    \includegraphics[width=0.8\textwidth]{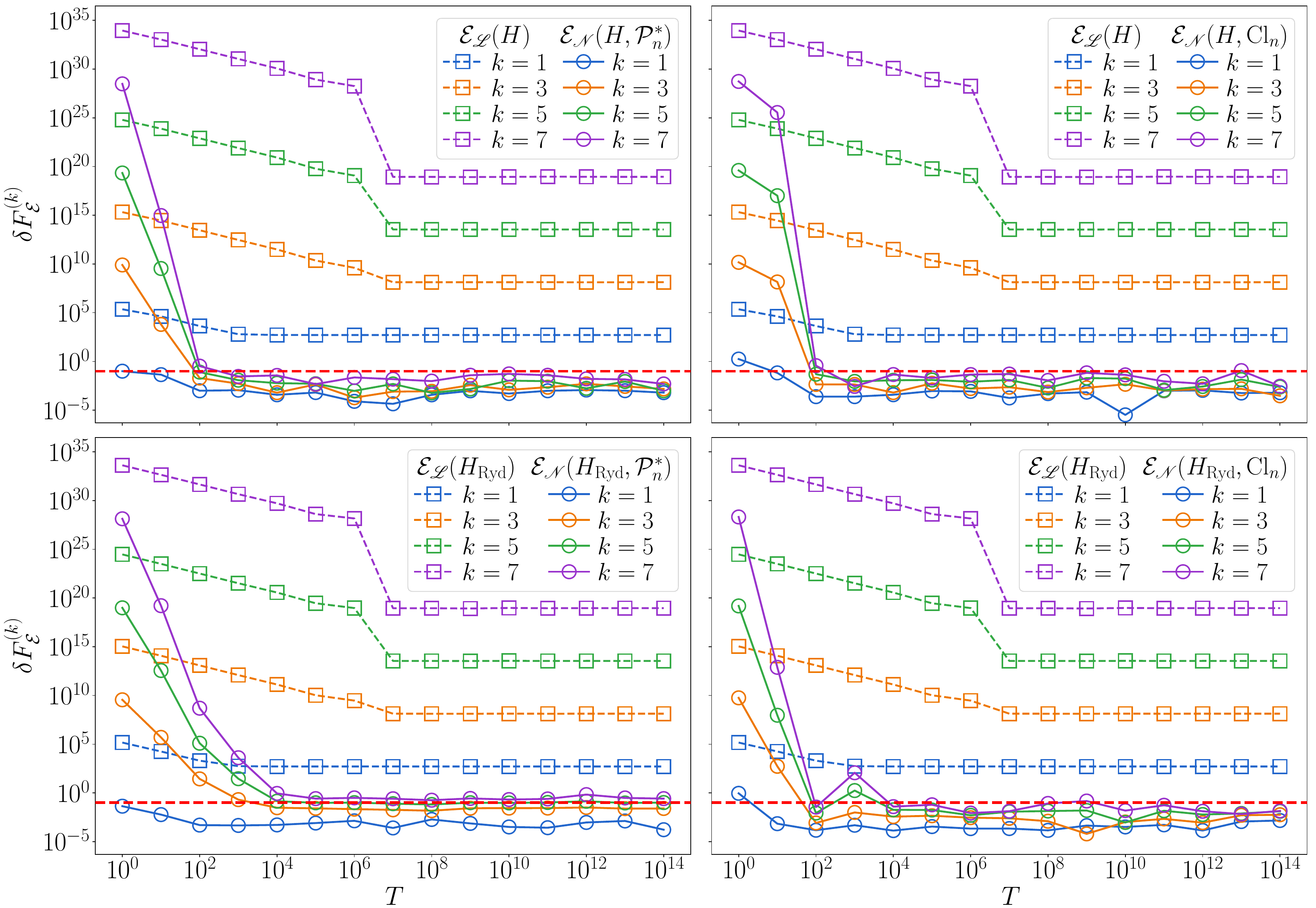}
    \caption{The normalized deviation of the frame potential $\delta F_{\mathcal E}^{(k)}(T)$ as a function of $T$ for the relevant ensembles. Here, the chaotic Hamiltonian $H$ in the upper panel acts on $n=9$ qubits; after being sampled from the GUE, it is fixed throughout the protocol. The chaotic Hamiltonian $H_{\mathrm{Ryd}}$ is the Rydberg Hamiltonian in \eref{eq:rydHamiltonian}, acting on $n=9$ qubits, with $\Omega_i=\Omega_0(1+\epsilon_i)$ and $\Delta_i=\Delta_0(1+\eta_i)$. The inhomogeneities $\epsilon_i$ and $\eta_i$ are sampled independently and uniformly from $[-0.12,0.12]$ and $[-0.35,0.35]$, respectively. For the GUE Hamiltonian $H$, each point is averaged over $10^7$ independent samples. For each realization of $H_{\mathrm{Ryd}}$, every data point is averaged
over $10^7$ random samples of the corresponding unitary ensemble;
the final data points are further averaged over 20 independent realizations of
$H_{\mathrm{Ryd}}$. The horizontal red dashed line marks the reference threshold $\delta F=0.1$.}\label{fig:finite_time_deltaF}
\end{figure*}

\section{Finite-time convergence of the frame potential}
\label{app:finite_time_convergence}

The analytical derivation of \thref{thm:main_oneH_int} is formulated in the
perfect time-filter limit \(T\to\infty\). This idealized assumption is  necessary for obtaining a closed-form expression for the long-time frame
potential. However, an infinitely long sampling window is not attainable in realistic
implementations, and approximate unitary-design formation does not generally require
such an ideal limit.  It is
therefore useful to check explicitly how the frame potential approaches its
long-time value as the sampling window \(T\) is increased. Following \rcite{sun2026unitary}, we investigate the normalized deviation of the frame
potential,
\begin{equation}
    \delta F_{\mathcal E}^{(k)}(T)
    :=
    \frac{\left|F_{\mathcal E}^{(k)}(T)-k!\right|}
    {k!},
    \label{eq:deltaF_def}
\end{equation}
which measures the relative deviation of the finite-time
ensemble from the Haar value at design order \(k\).

Figure~\ref{fig:finite_time_deltaF} examines the finite-time convergence of the
frame potential for \(k=1,3,5,7\). We compare the unperturbed single-Hamiltonian
temporal ensemble \(\mathcal E_{\scrL}(H)\) with the perturbed protocols
\(\mathcal E_{\scrN}(H,\caP_n^*)\) and
\(\mathcal E_{\scrN}(H,\mathrm{Cl}_n)\). For the chaotic dynamics, we consider
both a fixed GUE Hamiltonian and the physically accessible Rydberg Hamiltonian
defined in \eref{eq:rydHamiltonian}.

For the unperturbed temporal ensemble \(\mathcal E_{\scrL}(H)\), the frame potential
converges slowly with \(T\) and saturates far above the Haar value. In contrast,
after introducing an intermediate perturbation, \(\delta F_{\mathcal E}^{(k)}\)
drops much faster for both \(\mathcal P_n^*\) and \(\Cl_n\), and for both choices of
chaotic Hamiltonian, reaching the Haar-level regime within a rather short
sampling window. This shows that the design-forming effect predicted by
\thref{thm:main_oneH_int} appears well before the formal \(T\to\infty\) limit, with
the Rydberg results further supporting the robustness of this finite-time
convergence.

This is
important experimentally, as it suggests that one does not need to sample
evolution times over an extremely large window in order to generate an approximate unitary design. A moderate finite-time window, together with simple intermediate perturbations, can already generate near-Haar frame potentials.

\let\addcontentsline\oldaddcontentsline

\clearpage

\newpage

\setcounter{equation}{0}
\setcounter{figure}{0}
\setcounter{table}{0}
\setcounter{theorem}{0}
\setcounter{lemma}{0}
\setcounter{section}{0}
\counterwithout{equation}{section}

\setcounter{page}{1}

\renewcommand{\theequation}{S\arabic{equation}}
\renewcommand{\thefigure}{S\arabic{figure}}
\renewcommand{\thetable}{S\arabic{table}}
\renewcommand{\thetheorem}{S\arabic{theorem}}
\renewcommand{\thelemma}{S\arabic{lemma}}
\renewcommand{\theproposition}{S\arabic{proposition}}
\renewcommand{\thecorollary}{S\arabic{corollary}}
\renewcommand{\thesection}{S\arabic{section}}

\makeatletter
\renewcommand{\theHtheorem}{S\arabic{theorem}}
\renewcommand{\theHlemma}{S\arabic{lemma}}
\renewcommand{\theHproposition}{S\arabic{proposition}}
\renewcommand{\theHcorollary}{S\arabic{corollary}}
\renewcommand{\theHequation}{S\arabic{equation}}
\renewcommand{\theHfigure}{S\arabic{figure}}
\renewcommand{\theHtable}{S\arabic{table}}
\renewcommand{\theHsection}{S\arabic{section}}
\makeatother

\onecolumngrid	
\begin{center}
	\textbf{\large Supplemental Material for Unitary designs from perturbed time evolutions of a
chaotic Hamiltonian}
\end{center}

\tableofcontents

\bigskip

In this Supplemental Material, we provide the technical details supporting the results presented in the main text. We first give a more formal discussion and proof of the time-averaging procedure under the $k$-th order additive nonresonance condition. We then review the Haar twirling channel and its leading-order expansion based on Weingarten calculus, which provides the main technical tool for analyzing the frame potentials in our construction. Finally, combining these ingredients, we prove the central result of this work, namely \thref{thm:main_oneH_int}.

\section{Time Average under $k$-th order additive nonresonance}\label{sec:app_time_ave}
In the main text, we formally introduced the definition of \(k\)-th order additive nonresonance and gave an intuitive explanation of how, under the time average, exponentially many terms are eliminated by phase cancellation.  Here we provide a rigorous formulation and proof of this result, which will serve as a central technical ingredient in the subsequent proofs.

Following the convention introduced in the main text, we consider two unitaries
\begin{equation}\label{eq:app_two_unitaries_convention}
    U=\rme^{-\rmi Ht_2}P\rme^{-\rmi Ht_1},
    \qquad
    V=\rme^{-\rmi Ht_2'}Q\rme^{-\rmi Ht_1'} ,
\end{equation}
with
\begin{equation}
    \delta t_1=t_1-t_1',
    \qquad
    \delta t_2=t_2-t_2' .
\end{equation}
Here \(P,Q\sim\caE_{\mathrm{int}}\) are independent random unitaries sampled from the intermediate ensemble \(\caE_{\mathrm{int}}\). We then introduce the notation used in \eref{eq:index_vectors_mn}:
\begin{align}
    \vec m
    &=
    (m_1,\ldots,m_k,\tilde m_1,\ldots,\tilde m_k)
    =
    \bfm\oplus\tilde{\bfm},
    \notag\\
    \vec n
    &=
    (n_1,\ldots,n_k,\tilde n_1,\ldots,\tilde n_k)
    =
    \bfn\oplus\tilde{\bfn},
    \label{eq:index_vectors_mn}
\end{align}
where $m_i,n_i\in\{1,2,\ldots,D\}$ and $\bfm,\bfn \in [D^k]$ ($D$ is the dimension of the Hilbert space). 
Also, we introduce the symmetry factor $\rmS(\bfm)$ of a vector $\bfm$ as follows:
\begin{equation}\label{eq:app_def_f}
    \rmS(\bfm):=\prod_{a=1}^D \mu_a(\bfm)!,
\end{equation}
where $ \mu_a(\bfm) $ counts the number of occurrences of the label $a$ in the vector $\bfm$. This symmetry factor is equal to the number of permutations in $S_k$ that leave $\bfm$ invariant. For example, if $ \bfm =(1,1,2,2,2)$, then $\rmS(\bfm)=3!2!=12.$ We then have the following theorem.

\begin{theorem}
\label{thm:time_averaged_permutation_sector}
Consider the two unitaries defined in \eref{eq:app_two_unitaries_convention}. Suppose that $H$ satisfies the $k$-th order additive nonresonance condition in \eref{eq:k_nonresonance_condition}.
Then, in the perfect time-filter limit, $T\to\infty$, we have
\begin{equation}
    \bbE_{\delta t_1,\delta t_2}
    \left|\tr(V^{\dagger}U)\right|^{2k}
    =
    \sum_{\pi,\sigma\in S_k}
    \rmR_{\pi,\sigma}(P,Q),
    \label{eq:time_averaged_R_pi_sigma}
\end{equation}
where
\begin{equation}\label{eq:R_pi_sigma_def}
    \rmR_{\pi,\sigma}(P,Q):=
    \sum_{\bfm,\bfn\in[D]^k}\frac{1}{\rmS(\bfm)\rmS(\bfn)}
    \prod_{r=1}^{k}
    P_{m_r n_r}
    P^{\dagger}_{n_{\sigma(r)}m_{\pi(r)}}
    Q^{\dagger}_{n_r m_r}
    Q_{m_{\pi(r)}n_{\sigma(r)}} .
\end{equation}
\end{theorem}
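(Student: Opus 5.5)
The plan is to expand everything in the energy eigenbasis of $H$ and let the time average act on the resulting phases. Write $H=\sum_\alpha E_\alpha|\alpha\rangle\langle\alpha|$ and take all matrix elements of $P,Q$ in this basis. By cyclicity of the trace,
\begin{equation*}
    \tr(V^\dagger U)=\tr\!\bigl(Q^\dagger \rme^{-\rmi H\delta t_2}P\rme^{-\rmi H\delta t_1}\bigr)
    =\sum_{m,n=1}^{D}\rme^{-\rmi(E_m\delta t_2+E_n\delta t_1)}P_{mn}Q^\dagger_{nm}.
\end{equation*}
Raising this to the $k$-th power uses the strings $\bfm,\bfn\in[D]^k$. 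The complex conjugate uses independent strings $\bfm',\bfn'$ and contributes factors $\overline{P_{m'n'}Q^\dagger_{n'm'}}=P^\dagger_{n'm'}Q_{m'n'}$. This writes $|\tr(V^\dagger U)|^{2k}$ as a finite sum of products $\prod_r P_{m_rn_r}Q^\dagger_{n_rm_r}P^\dagger_{n'_rm'_r}Q_{m'_rn'_r}$. Each product carries the phase
\begin{equation*}
\exp\Bigl[-\rmi\,\delta t_2\textstyle\sum_r(E_{m_r}-E_{m'_r})-\rmi\,\delta t_1\sum_r(E_{n_r}-E_{n'_r})\Bigr].
\end{equation*}

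Next comes the time average. Since $\delta t_i=t_i-t_i'$ with $t_i,t_i'$ independent and uniform on $[0,T]$, we have $\bbE\,\rme^{-\rmi\omega\delta t_i}=\bigl|\tfrac1T\int_0^T\rme^{-\rmi\omega t}\rmd t\bigr|^2$. This equals $1$ for $\omega=0$ and tends to $0$ as $T\to\infty$ for $\omega\neq0$. The sum is finite, so the limit can be taken term by term. Because $\delta t_1$ and $\delta t_2$ are independent, a term survives only if both frequencies vanish separately: $\sum_r(E_{m_r}-E_{m'_r})=0$ and $\sum_r(E_{n_r}-E_{n'_r})=0$. By the $k$-th order additive nonresonance condition of Definition~\ref{def:k_nonresonance}, this is equivalent to $\bfm'$ being a rearrangement of $\bfm$ and $\bfn'$ a rearrangement of $\bfn$. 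The converse direction is trivial. Hence the limit equals the sum over all $(\bfm,\bfn,\bfm',\bfn')$ with $\bfm'\in S_k\cdot\bfm$ and $\bfn'\in S_k\cdot\bfn$, each with unit weight.

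The remaining and main delicate step is the bookkeeping that converts "$\bfm'$ is a rearrangement of $\bfm$" into a sum over permutations without overcounting. For a fixed $\bfm$, the map $\pi\mapsto\bfm'=(m_{\pi(1)},\dots,m_{\pi(k)})$ from $S_k$ onto the orbit of $\bfm$ has every fiber of size equal to the stabilizer order, which is $\rmS(\bfm)$ as defined in \eref{eq:app_def_f}. Therefore
\begin{equation*}
\sum_{\bfm'\in S_k\cdot\bfm}f(\bfm')=\frac{1}{\rmS(\bfm)}\sum_{\pi\in S_k}f(m_{\pi(1)},\dots,m_{\pi(k)}),
\end{equation*}
and the same holds for $\bfn$ with $\sigma$ and the factor $1/\rmS(\bfn)$. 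The two constraints are independent, so the weights simply multiply. Substituting $m'_r=m_{\pi(r)}$ and $n'_r=n_{\sigma(r)}$ into the conjugate factors gives exactly $P^\dagger_{n_{\sigma(r)}m_{\pi(r)}}Q_{m_{\pi(r)}n_{\sigma(r)}}$. Exchanging the finite sums over $\pi,\sigma$ and over $\bfm,\bfn$ then yields $\sum_{\pi,\sigma}\rmR_{\pi,\sigma}(P,Q)$ with $\rmR_{\pi,\sigma}$ as in \eref{eq:R_pi_sigma_def}. I expect the only real care needed is in this multiplicity step, namely checking that the stabilizer weight is attached to the unprimed strings. The phase-cancellation step is routine once the independence of $\delta t_1$ and $\delta t_2$ is used to split the constraint in two.
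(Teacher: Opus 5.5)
Your proposal is correct and follows the paper's proof essentially step by step. You expand in the energy eigenbasis, let the time average kill every nonzero frequency (separately for $\delta t_1$ and $\delta t_2$), use additive nonresonance to reduce the surviving index strings to rearrangements, and divide by the symmetry factors $\rmS(\bfm)\rmS(\bfn)$ to undo overcounting; your orbit--stabilizer count and the explicit averaging kernel $\bigl|\frac{1}{T}\int_0^T \rme^{-\rmi\omega t}\,\rmd t\bigr|^2$ just make explicit what the paper states verbally.
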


\subsection{Proof of \thref{thm:time_averaged_permutation_sector}}
\begin{proof}[Proof of \thref{thm:time_averaged_permutation_sector}]
For $U,\,V$ defined in \eref{eq:app_two_unitaries_convention}, we have
\begin{align}
    \left|\tr(V^\dagger U)\right|^2&=\left|\tr\left(Q^\dagger \rme^{-\rmi H \delta t_2} P \rme^{-\rmi H \delta t_1}\right)\right|^2\notag\\
    &=\sum_{m_1,n_1}\sum_{\tilde{m}_1,\tilde{n}_1}\<n_1|Q^\dagger|m_1\>\<m_1|P|n_1\>\<\tilde{n}_1 |P^\dagger|\tilde{m}_1\>\<\tilde{m}_1|Q|\tilde{n}_1\>\rme^{-\rmi \delta t_1(E_{n_1}-E_{\tilde{n}_1})-\rmi \delta t_2(E_{m_1}-E_{\tilde{m}_1})}\notag\\
    &=\sum_{m_1,n_1}\sum_{\tilde{m}_1,\tilde{n}_1}P_{m_1 n_1} Q^\dagger_{n_1m_1}P^\dagger_{\tilde{n}_1 \tilde{m}_1} Q_{\tilde{m}_1 \tilde{n}_1}\rme^{-\rmi \delta t_1(E_{n_1}-E_{\tilde{n}_1})-\rmi \delta t_2(E_{m_1}-E_{\tilde{m}_1})}.
    \end{align}
Here $m_1,\,n_1,\,\tilde{m}_1,\,\tilde{n}_1$ are eigenstate labels of $H$. Therefore,
\begin{equation}\label{eq:trUV_2k}
    \left|\tr(V^\dagger U)\right|^{2k}=\sum_{\vec m,\vec n \in [D^{2k}]}\exp\left[ -\rmi
        \sum_{r=1}^{k}
        \left(E_{m_j}-E_{\tilde m_j}\right)\delta t_2-\rmi
        \sum_{r=1}^{k}
        \left(E_{n_j}-E_{\tilde n_j}\right)\delta t_1\right]
    \left(\prod_{r=1}^k  P_{m_r n_r} Q^\dagger_{n_r m_r}P^\dagger_{\tilde{n}_r \tilde{m}_r} Q_{\tilde{m}_r \tilde{n}_r}\right).
\end{equation}
In the perfect time-filter limit $T\rightarrow \infty$, averaging over \(\delta t_1\) and \(\delta t_2\) suppresses all terms with nonzero oscillation frequency.  Therefore, a term in \eref{eq:trUV_2k} can survive only when the two energy sums in the forward replicas match those in the conjugate replicas:
\begin{equation}
    \sum_{j=1}^{k}\left(E_{m_j}- E_{\tilde m_j} \right)
    =\sum_{j=1}^{k}\left(E_{n_j}- E_{\tilde n_j} \right)=0.
    \label{eq:app_energy_sum_constraint}
\end{equation}
Furthermore, since the spectrum of the Hamiltonian $H$ satisfies the $k$-th order additive nonresonance condition in \eref{eq:k_nonresonance_condition}, the only possible solutions to \eref{eq:app_energy_sum_constraint} are
\begin{equation}
    (\tilde m_1,\ldots,\tilde m_k)=
    \pi(m_1,\ldots,m_k),\quad (\tilde n_1,\ldots,\tilde n_k)=\sigma(n_1,\ldots,n_k)
    \label{eq:app_mn_permutation_matching}
\end{equation}
for some permutations $\pi,\sigma \in S_k$. For each fixed vector $\bfm$, some permutations act trivially on it whenever $\bfm$ contains repeated entries. Therefore, when the summation is taken over the full permutation group $S_k$, we need to divide by the corresponding symmetry factor. The same consideration applies to $\bfn$. Therefore,
\begin{align}
    \bbE_{\delta t_1,\delta t_2}\left|\tr(V^\dagger U)\right|^{2k}&=\sum_{\pi,\sigma \in S_k}\sum_{\bfm,\bfn \in [D]^k}
    \frac{1}{\rmS(\bfm)\rmS(\bfn)}\prod_{r=1}^k  P_{m_r n_r} Q^\dagger_{n_r m_r}P^\dagger_{n_{\sigma(r)}m_{\pi(r)}} Q_{m_{\pi(r)} n_{\sigma(r)}}\notag\\
    &=\sum_{\pi,\sigma\in S_k}
    \rmR_{\pi,\sigma}(P,Q),
\end{align}
which proves \thref{thm:time_averaged_permutation_sector}.
\end{proof}

\section{Haar twirling and Leading-Order Analysis}
\label{sec:weingarten_leading_order}

In this section, we review the expansion of the Haar twirling channel based on
Weingarten calculus.  This expansion is a key technical ingredient for both
understanding and proving \thref{thm:main_oneH_int}.

Let \(q\) be a positive integer.  For an operator \(X\) on
\((\bbC^D)^{\otimes q}\), define the Haar twirling channel by
\begin{equation}
    \Phi_q(X)
    =
    \int_{\rmU(D)}
    \rmd W\,
    W^{\dagger\otimes q}XW^{\otimes q}.
    \label{eq:haar_twirl_schur}
\end{equation}
By Schur--Weyl duality, the image of \(\Phi_q\) is contained in the algebra
spanned by the permutation operators \(V_\tau\), \(\tau\in S_q\).  We use the
convention
\[
    V_\tau |v_1\>\otimes\cdots\otimes |v_q\>
    =
    |v_{\tau^{-1}(1)}\>\otimes\cdots\otimes |v_{\tau^{-1}(q)}\>.
\]
With this convention, the Haar twirling admits the exact Weingarten expansion
\begin{equation}
    \Phi_q(X)
    =
    \sum_{\alpha,\beta\in S_q}
    \mathrm{Wg}^{(q)}_D(\alpha^{-1}\beta)
    \tr(XV_\alpha)
    V_\beta .
    \label{eq:exact_schur_weyl_twirl}
\end{equation}
Here \(\mathrm{Wg}^{(q)}_D\) denotes the unitary Weingarten function.  For fixed
\(q\), its leading order in large-\(D\) limit is
\begin{equation}
    \mathrm{Wg}^{(q)}_D(\tau)
    =
    \caO_q(D^{-q-|\tau|}),
    \qquad
    |\tau|=q-c(\tau),
    \label{eq:wg_scaling_schur}
\end{equation}
where \(c(\tau)\) denotes the number of cycles of \(\tau\in S_q\).  Thus the leading
contribution comes from the identity permutation $e$, while permutations with
larger length \(|\tau|\) are suppressed by additional powers of \(D^{-1}\).

\section{Proof of \thref{thm:main_oneH_int}}\label{sec:app_prove_thm_main}
In this section, we prove \thref{thm:main_oneH_int}.  We start with introducing a
piece of notation that will be used throughout the proof and then present some corollary and propositions to prepare for the proof.  To organize the
\(4k\) matrix elements appearing in \eref{eq:R_pi_sigma_def} in a compact and
systematic way, define the label set
\begin{equation}
    \Omega=
    \{a_r,b_r,c_r,d_r:r=1,\ldots,k\},\qquad \left|\Omega\right|=4k.
\end{equation}
For each \(\omega\in\Omega\), we associate an operator \(R_\omega\) and two
external indices \(x_\omega,y_\omega\) as follows:

\[
\begin{array}{c|c|c|c}\
\omega & B^{(\omega)} & x_\omega & y_\omega\\
\hline
a_r & P & m_r & n_r\\
b_r & Q^\dagger & n_r & m_r\\
c_r & P^\dagger & n_{\sigma(r)} & m_{\pi(r)}\\
d_r & Q & m_{\pi(r)} & n_{\sigma(r)}
\end{array}
\]
This notation allows us to rewrite the product of matrix elements in
\(\eref{eq:R_pi_sigma_def}\) as a product over the labels
\(\omega\in\Omega\). We now rewrite \eref{eq:R_pi_sigma_def} as 
\begin{equation}\label{eq:RAlternative}
       \rmR_{\pi,\sigma}(P,Q)
    =
    \sum_{\bfm,\bfn \in [D]^k}\frac{1}{\rmS(\bfm) \rmS(\bfn)}
    \prod_{\omega\in\Omega}
    B^{(\omega)}_{x_\omega y_\omega}.
\end{equation}
We obtain the following results.

\begin{corollary}\label{cor:app_frame_potential_expanded}
    Suppose that \(H\) satisfies the \(k\)-th order additive nonresonance
    condition in \(\eref{eq:k_nonresonance_condition}\).  Consider the
    ensemble \(\caE_{\scrN}(H,\caE_{\mathrm{int}})\) with intermediate
    ensemble \(\caE_{\mathrm{int}}\).  In the perfect time-filter limit
    \(T\to\infty\), the Haar-basis-averaged frame potential satisfies
    \begin{equation}
        \overline{F}_{\caE_{\scrN}(H,\caE_{\mathrm{int}})}^{(k)}
        =
        \sum_{\pi,\sigma\in S_k}\overline{\rmR}_{\pi,\sigma},
        \label{eq:haar_averaged_frame_potential_R_sum}
    \end{equation}
    where
    \begin{equation}
        \overline{\rmR}_{\pi,\sigma}
        =
        \sum_{\alpha,\beta \in S_{4k}}
        \mathrm{Wg}^{(4k)}_D(\alpha^{-1}\beta)\,
        \bbE_{P,Q\sim \caE_{\mathrm{int}}}
        \bigl[
            C_{\alpha}(P,Q)
        \bigr]\,
        \rmG_{\pi,\sigma}(\beta),
        \label{eq:haar_averaged_R_pi_sigma_weingarten}
    \end{equation}
    with
\begin{equation}
    C_\alpha(P,Q)
    =
    \tr\left\{
    \left[
    \bigotimes_{\omega\in\Omega}B^{(\omega)}
    \right]
    V_\alpha
    \right\},
    \label{eq:C_alpha_def}
\end{equation}
and
\begin{equation}
    \rmG_{\pi,\sigma}(\beta)
    =
 \sum_{\bfm,\bfn \in [D]^k}\frac{ \<\mathbf x(\bfm,\bfn)|
    V_\beta
    |\mathbf y(\bfm,\bfn)\>}{\rmS(\bfm)\rmS(\bfn)}.
    \label{eq:N_beta_def}
\end{equation}
\end{corollary}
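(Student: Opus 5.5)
The plan is to begin from Theorem~\ref{thm:time_averaged_permutation_sector}, applied to the rotated Hamiltonian $W^\dagger H W$. Its spectrum equals that of $H$, so the additive nonresonance condition still holds. In the eigenbasis of $W^\dagger H W$ the two interleaved unitaries become $W P W^\dagger$ and $W Q W^\dagger$. More concretely, writing $\rme^{-\rmi W^\dagger H W t}=W^\dagger \rme^{-\rmi Ht}W$ and using cyclicity of the trace gives
\begin{equation*}
\tr\bigl(Q^\dagger W^\dagger \rme^{-\rmi H\delta t_2}W P W^\dagger \rme^{-\rmi H\delta t_1}W\bigr)=\tr\bigl(\tilde Q^\dagger \rme^{-\rmi H\delta t_2}\tilde P \rme^{-\rmi H\delta t_1}\bigr),
\end{equation*}
with $\tilde P=WPW^\dagger$ and $\tilde Q=WQW^\dagger$. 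The time average in the perfect time-filter limit therefore yields $\sum_{\pi,\sigma}\rmR_{\pi,\sigma}(\tilde P,\tilde Q)$, where the matrix elements are taken in the fixed eigenbasis of $H$. The eigenbasis dependence has thus been moved entirely into conjugation of the intermediate unitaries. (Depending on the convention, this is $W^\dagger(\cdot)W$ rather than $W(\cdot)W^\dagger$; Haar invariance makes the two equivalent.)

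Next I take $\bbE_W$ and $\bbE_{P,Q}$ and exchange them with the finite sums over $\pi,\sigma,\bfm,\bfn$. Using the representation \eref{eq:RAlternative}, each summand is $\prod_{\omega\in\Omega}\tilde B^{(\omega)}_{x_\omega y_\omega}$, which equals the single matrix element $\<\mathbf x|\bigotimes_\omega \tilde B^{(\omega)}|\mathbf y\>$ on $(\bbC^D)^{\otimes 4k}$. Here $\mathbf x=\mathbf x(\bfm,\bfn)$ and $\mathbf y=\mathbf y(\bfm,\bfn)$ are the index strings read off from the table. Each $\tilde B^{(\omega)}$ is $W B^{(\omega)}W^\dagger$, and this holds for the daggered entries as well, since $(WPW^\dagger)^\dagger=WP^\dagger W^\dagger$. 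Hence $\bigotimes_\omega\tilde B^{(\omega)}=W^{\otimes 4k}\bigl(\bigotimes_\omega B^{(\omega)}\bigr)W^{\dagger\otimes 4k}$.

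Averaging over Haar $W$, which is invariant under $W\to W^\dagger$, gives exactly the twirl $\Phi_{4k}\bigl(\bigotimes_\omega B^{(\omega)}\bigr)$ of \eref{eq:haar_twirl_schur}. The Weingarten expansion \eref{eq:exact_schur_weyl_twirl} then gives $\sum_{\alpha,\beta}\mathrm{Wg}^{(4k)}_D(\alpha^{-1}\beta)\,C_\alpha(P,Q)\,V_\beta$. Since $C_\alpha$ depends only on $P,Q$, the average $\bbE_{P,Q}$ passes onto it. Taking the $\<\mathbf x|\cdot|\mathbf y\>$ matrix element and summing with weights $1/(\rmS(\bfm)\rmS(\bfn))$ yields $\rmG_{\pi,\sigma}(\beta)$. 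This establishes \eref{eq:haar_averaged_R_pi_sigma_weingarten}, and summing over $\pi,\sigma$ gives \eref{eq:haar_averaged_frame_potential_R_sum}.

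The main obstacle is bookkeeping rather than anything conceptual. First, the ordering of the tensor factors in $\Omega$ must be consistent between $C_\alpha$ and the basis strings $\mathbf x,\mathbf y$. Second, the $V_\tau$ convention must match \eref{eq:exact_schur_weyl_twirl}. Third, the order of limits needs justification: $T\to\infty$ is taken before the $W$ and ensemble averages. This is legitimate because the integrand is bounded by $D^{2k}$ and the resonance structure is independent of $W$, so dominated convergence applies.
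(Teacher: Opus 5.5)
Your proposal is correct and follows essentially the paper's route: Haar-average over the eigenbasis, write the $4k$ matrix elements as a single matrix element on $(\mathbb{C}^D)^{\otimes 4k}$, and substitute the Weingarten expansion of the $4k$-fold twirl. The only difference is cosmetic. You conjugate the intermediate unitaries and twirl $\bigotimes_\omega B^{(\omega)}$, whereas the paper moves $W$ onto the basis operators and twirls those. By cyclicity these are equivalent, and your placement directly gives the statement's pairing of $C_\alpha$ with $\langle\mathbf x|V_\beta|\mathbf y\rangle$, without the paper's implicit relabeling of $\alpha$ and $\beta$ or its index-orientation bookkeeping.
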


\begin{proposition}\label{prp:lower_trace_suppression}
    Suppose an ensemble $\caE$ satisfies the $k$-th order frame-potential suppression criterion in \eref{eq:small_frame_int}.
    \begin{equation}
        F^{(k)}_{\caE}=\bbE_{U,V\sim\caE} \left|\tr(V^{\dagger}U)\right|^{2k}=\caO_k(D^{2k-\epsilon}),\quad\epsilon=\Omega_k(1).
    \end{equation}
Then for every $1\le\ell \le k$, the  $\ell$-th order frame-potential suppression criterion is also satisfied.
    \begin{equation}
          F^{(\ell)}_{\caE}=\bbE_{U,V\sim\caE} \left|\tr(V^{\dagger}U)\right|^{2\ell}=\caO_\ell(D^{2\ell-\epsilon}),\quad\epsilon=\Omega_\ell(1).
    \end{equation}
\end{proposition}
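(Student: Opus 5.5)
The plan is to control the lower moments by Hölder's (Lyapunov's) inequality applied to the nonnegative random variable $X:=|\tr(V^\dagger U)|^2$, with $U,V\sim\caE$ independent. For $1\le \ell\le k$, the map $p\mapsto (\bbE X^p)^{1/p}$ is nondecreasing. Equivalently, Jensen's inequality applied to the concave function $x\mapsto x^{\ell/k}$ on $[0,\infty)$ gives
\begin{equation*}
    F_{\caE}^{(\ell)}=\bbE\, X^{\ell}=\bbE\,(X^{k})^{\ell/k}\le \left(\bbE\, X^{k}\right)^{\ell/k}=\left(F_{\caE}^{(k)}\right)^{\ell/k}.
\end{equation*}

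Inserting the hypothesis $F_{\caE}^{(k)}\le C_k D^{2k-\epsilon}$ then yields
\begin{equation*}
    F_{\caE}^{(\ell)}\le C_k^{\ell/k} D^{2\ell-\epsilon\ell/k}.
\end{equation*}
This is the $\ell$-th order criterion with exponent $\epsilon_\ell:=\epsilon\,\ell/k\ge\epsilon/k$. For fixed $k$, and hence for the finitely many $\ell\le k$, this exponent remains bounded below by a positive constant, so $\epsilon_\ell=\Omega(1)$. The prefactor $C_k^{\ell/k}$ is likewise independent of $D$.

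The only point needing care is bookkeeping. The constants and the exponent now depend on $k$ as well as $\ell$, which is harmless because $k$ is fixed throughout \thref{thm:main_oneH_int}. I expect no genuine obstacle. The sole subtlety is that $X$ is bounded by $D^2$, so all moments are finite and Jensen's inequality applies without integrability issues.
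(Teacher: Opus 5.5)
Your proof is correct and matches the paper's argument: the paper also applies H\"older's inequality to get $F_{\caE}^{(\ell)}\le\bigl(F_{\caE}^{(k)}\bigr)^{\ell/k}$ and sets $\epsilon_\ell=\ell\epsilon/k$. Your remark that this exponent depends on $k$ as well as $\ell$, which is harmless because $k$ is fixed, is slightly more careful bookkeeping than the paper's.
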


\begin{proposition}\label{prp:A_kD_upp}
    For $\bfm \in [D]^k$ and $\rmS(\bfm)$ defined in \eqsref{eq:index_vectors_mn}{eq:app_def_f}, we have
    \begin{equation}\label{eq:A_kD_expree}
        A_k(D):=\sum_{\bfm \in [D]^k} \frac{1}{\rmS(\bfm)}=k!
\sum_{\substack{\mu_1,\ldots,\mu_D\ge 0\\
\mu_1+\cdots+\mu_D=k}}
\prod_{a=1}^D \frac{1}{(\mu_a!)^2}=D^{k}+\caO_k(D^{k-1}).
    \end{equation}
\end{proposition}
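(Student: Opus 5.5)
The plan is to reorganize the sum over strings $\bfm\in[D]^k$ according to their occupation numbers $\mu_a(\bfm)$, which gives the middle expression in \eref{eq:A_kD_expree}. For the asymptotics, rather than estimating that combinatorial sum directly, I will sandwich $A_k(D)$ between two elementary bounds.

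\emph{Step 1 (exact rewriting).} The summand $1/\rmS(\bfm)$ depends on $\bfm$ only through its occupation vector $\mu=(\mu_1,\ldots,\mu_D)$, where $\mu_a\ge0$ and $\sum_a\mu_a=k$. The number of strings $\bfm\in[D]^k$ with a given occupation vector is the multinomial coefficient $k!/\prod_a\mu_a!$. Each such string contributes $1/\prod_a\mu_a!$ by the definition \eref{eq:app_def_f}. Multiplying these two factors and summing over all occupation vectors gives
\begin{equation*}
A_k(D)=k!\sum_{\mu_1+\cdots+\mu_D=k}\prod_{a=1}^D\frac{1}{(\mu_a!)^2},
\end{equation*}
which is the first equality of the proposition.

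\emph{Step 2 (upper bound).} Every factor satisfies $\rmS(\bfm)\ge1$. Hence $A_k(D)\le\sum_{\bfm}1=D^k$.

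\emph{Step 3 (lower bound).} Restrict the sum to strings with pairwise distinct entries. For these strings $\rmS(\bfm)=1$, and there are exactly $D(D-1)\cdots(D-k+1)$ of them. All remaining terms are nonnegative, so
\begin{equation*}
A_k(D)\ \ge\ \prod_{j=0}^{k-1}(D-j)=D^k-\binom{k}{2}D^{k-1}+\caO_k(D^{k-2}).
\end{equation*}

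\emph{Step 4 (conclusion).} Combining Steps 2 and 3 gives $0\le D^k-A_k(D)\le\binom{k}{2}D^{k-1}+\caO_k(D^{k-2})$, that is, $A_k(D)=D^k+\caO_k(D^{k-1})$. Every constant involved depends only on $k$, as the statement requires.

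There is no genuine obstacle in this argument. The only point needing care is Step 1: the multinomial count must exactly cancel one factor of $\prod_a\mu_a!$, leaving the squared factorials. The sandwich in Steps 2–4 avoids having to count occupation vectors with repeated labels explicitly.
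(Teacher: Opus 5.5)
Your proposal is correct and follows essentially the paper's route. It uses the same regrouping by occupation numbers, with the multinomial cancelling one factor of $\prod_a\mu_a!$, and the same identification of the pairwise-distinct strings (total weight $D(D-1)\cdots(D-k+1)$) as the leading term, while the remaining $\caO_k(D^{k-1})$ strings each contribute at most $1$. Your sandwich $D(D-1)\cdots(D-k+1)\le A_k(D)\le D^k$ is a tidier packaging of the paper's split by the number $p$ of distinct labels, and it additionally gives a non-asymptotic error bound with explicit constant $\binom{k}{2}$.
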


\begin{proposition}\label{prp:N_beta_upperbound}
Consider \(\rmG_{\pi,\sigma}(\beta)\) defined in \eref{eq:N_beta_def}. If $V_\beta$ perfectly matches all the indices of $\bfx(\bfm,\bfn)$ and $\mathbf y(\bfm,\bfn)$,
\begin{equation}
    \langle x(m,n)|V_\beta|y(m,n)\rangle\equiv 1
\quad
\forall m,n\in[D]^k.
\end{equation}
Then we have
\begin{equation}
    \rmG_{\pi,\sigma}(\beta)=D^{2k}+\caO_k(D^{2k-1}).
\end{equation}
If $V_\beta$ fails to perfectly match all the indices of $\bfx(\bfm,\bfn)$ and $\mathbf y(\bfm,\bfn)$, then
\begin{equation}
    \rmG_{\pi,\sigma}(\beta)=\caO_k(D^{2k-1}).
\end{equation}
\end{proposition}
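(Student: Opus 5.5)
The plan is to compute \(\langle \mathbf x(\bfm,\bfn)|V_\beta|\mathbf y(\bfm,\bfn)\rangle\) explicitly and then count index configurations. With the convention for \(V_\beta\) fixed in \sref{sec:weingarten_leading_order}, the matrix element factorizes as a product of Kronecker deltas:
\begin{equation*}
\langle \mathbf x|V_\beta|\mathbf y\rangle=\prod_{\omega\in\Omega}\delta_{x_\omega,\,y_{\beta^{-1}(\omega)}},
\end{equation*}
possibly with \(\beta\) in place of \(\beta^{-1}\), depending on the convention. Each factor therefore takes values in \(\{0,1\}\). By the table defining \(x_\omega,y_\omega\), every \(x_\omega\) and every \(y_\omega\) is one of the \(2k\) free variables \(m_1,\ldots,m_k,n_1,\ldots,n_k\), with indices possibly relabelled through \(\pi\) or \(\sigma\). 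Consequently, each delta factor is an equality constraint \(u=v\) between two symbols drawn from \(\{m_r,n_r\}\).

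Next I split into two cases according to whether every delta is trivial. Call a factor trivial if \(u\) and \(v\) are literally the same variable; for example, \(m_{\pi(r)}\) paired with \(m_s\) where \(s=\pi(r)\). The hypothesis \(\langle \mathbf x|V_\beta|\mathbf y\rangle\equiv1\) for all \(\bfm,\bfn\) holds exactly when every factor is trivial. One direction is obvious. For the other, a factor pairing two distinct variables vanishes on the configuration where those two variables take different values, and such a configuration exists once \(D\ge2\).

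\emph{Perfect-match case.} Here the summand reduces to \(1/(\rmS(\bfm)\rmS(\bfn))\), so the double sum factorizes:
\begin{equation*}
\rmG_{\pi,\sigma}(\beta)=A_k(D)^2 .
\end{equation*}
By \pref{prp:A_kD_upp}, \(A_k(D)=D^k+\caO_k(D^{k-1})\), hence \(A_k(D)^2=D^{2k}+\caO_k(D^{2k-1})\).

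\emph{Imperfect-match case.} At least one factor imposes a nontrivial constraint \(u=v\) between two distinct variables among the \(2k\). Since \(\rmS(\cdot)\ge1\) and every summand lies in \([0,1]\), I bound the sum by the number of tuples \((\bfm,\bfn)\in[D]^{2k}\) satisfying that single equality. This number is exactly \(D^{2k-1}\), which gives \(\rmG_{\pi,\sigma}(\beta)=\caO_k(D^{2k-1})\). Any additional constraints only reduce the count further.

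The main obstacle I expect is bookkeeping rather than analysis. I need to pin down the correct convention (\(\beta\) versus \(\beta^{-1}\)) for the delta structure from the stated action of \(V_\tau\). I also need to check carefully that ``perfect match'' means precisely that all deltas are syntactically trivial; in particular, coincidences induced by \(\pi\) and \(\sigma\), such as fixed points, must be counted as trivial rather than as constraints. Once that identification is settled, both estimates follow directly from \pref{prp:A_kD_upp} and the elementary counting bound above.
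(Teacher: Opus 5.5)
Your proposal is correct and follows essentially the same route as the paper. Both arguments factor the matrix element into Kronecker deltas among the $2k$ indices. In the perfect-match case both obtain $\rmG_{\pi,\sigma}(\beta)=A_k(D)^2$ via \pref{prp:A_kD_upp}. In the imperfect case both bound the sum by $D^{2k-1}$ using $\rmS(\cdot)\ge 1$. You use a single nontrivial equality constraint where the paper counts $D^{f_{\pi,\sigma}(\beta)}$ free indices, and you make explicit the trivial-delta characterization of a perfect match that the paper leaves implicit; neither difference is substantive.
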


\begin{proposition}\label{prp:C_leading_order}
    Suppose that the intermediate ensemble $\caE_{\mathrm{int}}$ satisfies the
    $k$-th order trace-suppression condition in \eref{eq:small_tr_ensemble}.
    Let $C_\alpha(P,Q)$ be defined as in \eref{eq:C_alpha_def}. If the permutation
    $\alpha$ is not a product of $2k$ disjoint transpositions, then
    \begin{equation}
        \left|
        \bbE_{P,Q\sim \caE_{\mathrm{int}}}
        \left[C_{\alpha}(P,Q)\right]
        \right|
        \le
        \caO_k\left(D^{2k-\epsilon}\right),
        \qquad
        \epsilon=\Omega_k(1),
    \end{equation}
    where $\caO_k(\cdot)$ and $\Omega_k(\cdot)$ denote the upper and lower bounds of the scaling with respect to dimension $D$ for a fixed $k$.
\end{proposition}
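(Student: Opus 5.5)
The plan is to bound $C_\alpha(P,Q)$ pointwise by factorizing it over the cycles of $\alpha$, and then to use the trace-suppression condition only on the fixed points of $\alpha$.

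\textbf{Step 1: cycle factorization.} By the definition \eref{eq:C_alpha_def} and the convention for $V_\alpha$, the trace of a tensor product contracted with a permutation operator factorizes over the cycles of $\alpha$:
\begin{equation*}
C_\alpha(P,Q)=\prod_{\gamma\in\mathrm{cyc}(\alpha)}\tr\Bigl(\prod_{\omega\in\gamma}B^{(\omega)}\Bigr),
\end{equation*}
where the product inside each trace follows the cyclic order of $\gamma$, up to the orientation fixed by the convention. Each factor is the trace of a word in the unitaries $P,P^\dagger,Q,Q^\dagger$, so it is bounded in modulus by $D$.

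\textbf{Step 2: split by fixed points.} Let $f$ be the number of fixed points of $\alpha$. A fixed point $\omega$ contributes a factor $\tr B^{(\omega)}$, which is one of $\tr P$, $\overline{\tr P}$, $\tr Q$ or $\overline{\tr Q}$. The remaining $4k-f$ labels lie in cycles of length at least $2$, so there are at most $(4k-f)/2$ of them. This gives the pointwise bound
\begin{equation*}
|C_\alpha(P,Q)|\le D^{(4k-f)/2}\,|\tr P|^{f_P}\,|\tr Q|^{f_Q},\qquad f_P+f_Q=f,
\end{equation*}
where $f_P$ and $f_Q$ count the fixed points carrying $P^{(\dagger)}$ and $Q^{(\dagger)}$ respectively.

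\emph{Case $f=0$.} Since $\alpha$ is not a product of $2k$ disjoint transpositions, some cycle has length at least $3$. Hence $c(\alpha)\le 2k-1$ and $|C_\alpha|\le D^{2k-1}$ deterministically. The claim then holds with $\epsilon=1$.

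\emph{Case $f\ge1$.} Take the expectation and use the independence of $P$ and $Q$:
\begin{equation*}
\bbE|C_\alpha|\le D^{(4k-f)/2}\,\bbE|\tr P|^{f_P}\,\bbE|\tr Q|^{f_Q}.
\end{equation*}
Since $f_P\le 4k$, Hölder's (Jensen's) inequality gives
\begin{equation*}
\bbE|\tr P|^{f_P}\le\bigl(\bbE|\tr P|^{4k}\bigr)^{f_P/4k}=\caO_k\bigl(D^{(2k-\epsilon)f_P/4k}\bigr)=\caO_k\bigl(D^{f_P/2-\epsilon f_P/4k}\bigr),
\end{equation*}
using the trace-suppression condition \eref{eq:small_tr_ensemble}. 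The same bound holds for $Q$. Multiplying everything together yields
\begin{equation*}
|\bbE\, C_\alpha|\le\bbE|C_\alpha|=\caO_k\bigl(D^{2k-\epsilon f/4k}\bigr),
\end{equation*}
and $\epsilon f/4k\ge\epsilon/4k=\Omega_k(1)$.

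\textbf{Step 3: combine.} Taking the minimum of the two exponents, $\min(1,\epsilon/4k)=\Omega_k(1)$, gives the claimed bound. The finitely many $\alpha\in S_{4k}$ only affect constants depending on $k$.

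The main point needing care is Step 1: confirming that the cycle factorization holds exactly with the paper's $V_\alpha$ convention, and that it produces only traces of words (possibly conjugated, but with the same modulus). Once that is settled, the rest is counting cycles and one application of Hölder's inequality. In particular, no cancellation in the ensemble average is needed, since the bound is obtained by controlling $\bbE|C_\alpha|$ directly.
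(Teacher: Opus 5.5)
Your proposal is correct and follows essentially the same route as the paper: factorize $C_\alpha$ over the cycles of $\alpha$, bound each nontrivial cycle trace by $D$, control the fixed-point factors by H\"older's inequality applied to the $4k$-th trace moment, and use $c(\alpha)\le 2k-1$ when $\alpha$ has no fixed points but has a cycle of length at least three. The only difference is bookkeeping. You bound the number of nontrivial cycles by the real number $(4k-f)/2$ and apply H\"older separately to $P$ and $Q$, which avoids the paper's even/odd split in the number of fixed points and gives the explicit exponent $\min(1,\epsilon/4k)$.
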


\begin{lemma}\label{lem:SUMIDENTITY} For any $k\ge 1$ and any parameter $y$, one has \begin{equation} \sum_{\eta\in S_k} y^{|\mathrm{Fix}(\eta)|} = k!\sum_{r=0}^{k}\frac{(y-1)^r}{r!}. \end{equation} \end{lemma}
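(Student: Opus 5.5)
We prove the identity by organizing $S_k$ according to the fixed-point set of each permutation. Write $\mathrm{Fix}(\eta)=\{i\in[k]:\eta(i)=i\}$. For each subset $F\subseteq[k]$ with $|F|=j$, the permutations whose fixed-point set is exactly $F$ are in bijection with derangements of the complement $[k]\setminus F$. Their number is the derangement number $d_{k-j}$. Hence
\begin{equation}
    \sum_{\eta\in S_k} y^{|\mathrm{Fix}(\eta)|}
    =\sum_{j=0}^{k}\binom{k}{j} d_{k-j}\, y^{j}.
\end{equation}

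Next we insert the inclusion--exclusion formula $d_m=m!\sum_{s=0}^{m}\frac{(-1)^s}{s!}$, which is itself the $y=0$ specialization of the claim. Rather than substituting it directly, it is cleaner to avoid derangements altogether. Expand $y^{|\mathrm{Fix}(\eta)|}=\bigl((y-1)+1\bigr)^{|\mathrm{Fix}(\eta)|}=\sum_{A\subseteq \mathrm{Fix}(\eta)}(y-1)^{|A|}$, then exchange the order of summation:
\begin{equation}
    \sum_{\eta\in S_k} y^{|\mathrm{Fix}(\eta)|}
    =\sum_{A\subseteq[k]}(y-1)^{|A|}\,\#\{\eta\in S_k: A\subseteq\mathrm{Fix}(\eta)\}.
\end{equation}
The inner count is simply $(k-|A|)!$, since $\eta$ fixes $A$ pointwise and permutes the remaining $k-|A|$ points arbitrarily. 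Grouping by $r=|A|$ gives
\begin{equation}
    \sum_{r=0}^{k}\binom{k}{r}(k-r)!\,(y-1)^r=k!\sum_{r=0}^{k}\frac{(y-1)^r}{r!}.
\end{equation}
This proves the lemma as a polynomial identity in $y$, valid for any parameter.

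The only step needing care is the exchange of summation, namely checking that $\#\{\eta: A\subseteq \mathrm{Fix}(\eta)\}$ depends only on $|A|$. This is immediate, so we expect no real obstacle. The first route via derangements is kept only as a consistency check, since at $y=0$ it recovers $d_k$.

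In the subsequent proof of \thref{thm:main_oneH_int}, this lemma will presumably be applied with $y$ replaced by a ratio such as $F^{(\ell)}_{\caE_{\mathrm{int}}}/D^{2\ell}$-type weights attached to fixed points of $\pi^{-1}\sigma$. The form $k!\sum_r (y-1)^r/r!$, combined with binomial re-expansion, is what produces the coefficients $(k!)^2/(k-\ell)!$ in \eref{eq:oneH_main_result}.
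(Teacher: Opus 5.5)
Your proof is correct and is essentially the paper's argument. The paper writes $y^{|\mathrm{Fix}(\eta)|}=\prod_{i=1}^{k}\bigl[1+(y-1)\delta_{\eta(i)=i}\bigr]$, expands over subsets $A\subseteq[k]$, exchanges the sums, counts the $(k-|A|)!$ permutations fixing $A$ pointwise, and groups by $r=|A|$, exactly as you do; your derangement detour is not needed, and your closing guess about the application is slightly off, since the paper substitutes $y=1+q^{-1}$ with $q=D^{-2}|\tr(Q^\dagger P)|^2$ so that $(y-1)^r q^k=q^{k-r}$.
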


The proof of \coref{cor:app_frame_potential_expanded} is obtained by
substituting the exact Weingarten expansion
\(\eref{eq:exact_schur_weyl_twirl}\) into the time-averaged expression
\(\eref{eq:time_averaged_R_pi_sigma}\). The proof of \pref{prp:lower_trace_suppression}, \pref{prp:N_beta_upperbound} and \pref{prp:C_leading_order} is presented in \sref{sec:app_prove_lower_trace}, \sref{sec:app_prove_N_beta_upp} and \sref{sec:app_prove_C_lead} respectively. The proof of \lref{lem:SUMIDENTITY} is presented in \sref{sec:app_prove_SUMIDENTITY}.

\subsection{Proof of \coref{cor:app_frame_potential_expanded}}
\begin{proof}[Proof of \coref{cor:app_frame_potential_expanded}]
    The average of $\rmR_{\pi,\sigma}(P,Q)$ over a Haar-random basis is 
    \begin{align}\label{eq:RAVEINT}
        \bbE_W \rmR_{\pi,\sigma}(P,Q)&=\bbE_W \sum_{\bfm,\bfn\in[D^k]}\frac{1}{\rmS(\bfm)\rmS(\bfn)}\tr\Bigg[\left(P\otimes Q^{\dagger} \otimes P^{\dagger} \otimes Q\right)^{
        \otimes k}\prod_{r=1}^{k}\Big(W^{\dagger}|m_r\>\<n_r|W\otimes W^{\dagger}|n_r\>\<m_r|W \notag \\
        &\quad \otimes W^{\dagger}|n_{\sigma(r)}\>\<m_{\pi(r)}|W\otimes  W^{\dagger}|m_{\pi(r)}\>\<n_{\sigma(r)}|W\Big)\Bigg] \notag\\
        &=\sum_{\bfm,\bfn\in[D^k]}\frac{1}{\rmS(\bfm)\rmS(\bfn)}\tr\Bigg[\left(P\otimes Q^{\dagger} \otimes P^{\dagger} \otimes Q\right)^{4k}\Phi_{4k}\Bigg[\prod_{r=1}^{k}\Big(|m_r\>\<n_r| \otimes|n_r\>\<m_r| \notag\\
        &\quad \otimes |n_{\sigma(r)}\>\<m_{\pi(r)}| \otimes |m_{\pi(r)}\>\<n_{\sigma(r)}|\Big)\Bigg].
    \end{align}
Substituting \eqsref{eq:RAlternative}{eq:exact_schur_weyl_twirl} into \eref{eq:RAVEINT}, we have
\begin{equation}
    \bbE_W \rmR_{\pi,\sigma}(P,Q)= 
        \sum_{\alpha,\beta \in S_{4k}}
        \mathrm{Wg}^{(4k)}_D(\alpha^{-1}\beta)\,
            C_{\alpha}(P,Q)
        \rmG_{\pi,\sigma}(\beta),
\end{equation}   with
\begin{equation}
    C_\alpha(P,Q)
    =
    \tr\left\{
    \left[
    \bigotimes_{\omega\in\Omega}B^{(\omega)}
    \right]
    V_\alpha
    \right\},
\end{equation}
and
\begin{equation}
    \rmG_{\pi,\sigma}(\beta)
    =
 \sum_{\bfm,\bfn \in [D]^k}\frac{ \<\mathbf x(\bfm,\bfn)|
    V_\beta
    |\mathbf y(\bfm,\bfn)\>}{\rmS(\bfm)\rmS(\bfn)}.
\end{equation}
 Therefore,
     \begin{equation}
        \overline{F}_{\caE_{\scrN}(H,\caE_{\mathrm{int}})}^{(k)}
        =\sum_{\pi,\sigma\in S_k}\bbE_{P,Q\sim \caE_{\text{int}}} \bbE_W \rmR_{\pi,\sigma}(P,Q) =
        \sum_{\pi,\sigma\in S_k}\overline{\rmR}_{\pi,\sigma},
    \end{equation}
with
\begin{equation}
        \overline{\rmR}_{\pi,\sigma}
        =
        \sum_{\alpha,\beta \in S_{4k}}
        \mathrm{Wg}^{(4k)}_D(\alpha^{-1}\beta)\,
        \bbE_{P,Q\sim \caE_{\mathrm{int}}}
        \bigl[
            C_{\alpha}(P,Q)
        \bigr]\,
        \rmG_{\pi,\sigma}(\beta),
    \end{equation}
    which proves \coref{cor:app_frame_potential_expanded}.
\end{proof}

\subsection{Proof of \pref{prp:lower_trace_suppression}}\label{sec:app_prove_lower_trace}
\begin{proof}[Proof of \pref{prp:lower_trace_suppression}]
Since the ensemble $\caE$ satisfies 
\begin{equation}
    F_{\caE}^{(k)}=\bbE_{U,V\sim\caE} \left|\tr(V^{\dagger}U)\right|^{2k}=\caO_k(D^{2k-\epsilon_k}),\quad\epsilon_k=\Omega_k(1).
\end{equation}
By H\"older's inequality, for $1\le\ell \le k$, we have
\begin{equation}
F_{\caE}^{(\ell)}=\bbE_{U,V\sim\caE}\left|\tr(V^{\dagger}U)\right|^{2\ell}\le\left(\bbE_{U,V\sim\caE}\left|\tr(V^{\dagger}U)\right|^{2k}\right)^{\ell/k}\le\caO_\ell(D^{2\ell-\epsilon_\ell}),
\end{equation}
where $\epsilon_{\ell}=\ell\epsilon_k/k=\Omega_{\ell}(1)$ is still lower bounded by a constant with respect to dimension $D$ with a fixed $\ell$. This proves \pref{prp:lower_trace_suppression}.
\end{proof}

\subsection{Proof of \pref{prp:A_kD_upp}}\label{sec:app_prove_A_kD}
\begin{proof}[Proof of \pref{prp:A_kD_upp}]
Since $\mu_i$ defined in \eref{eq:app_def_f} counts the number of occurrences of label $i$ in $\bfm$, for any nonnegative integer vector $(\mu_1,\ldots,\mu_D)$ satisfying $\mu_1+\cdots+\mu_D=k$, the number of vectors $\bfm\in[D]^k$ with these occupation numbers is 
\begin{equation}
    \frac{k!}{\mu_1!\mu_2!\cdots\mu_D!}.
\end{equation}
Therefore the total contribution of all vectors with the same occupation numbers is
\begin{equation}
    \frac{k!}{\prod_{a=1}^D\mu_a!}
\cdot
\frac{1}{\prod_{a=1}^D\mu_a!}
=
k!
\prod_{a=1}^D\frac{1}{(\mu_a!)^2}.
\end{equation}
Summing over all possible occupation-number vectors gives
\begin{equation}\label{eq:A_kD_EXP2}
    A_k(D)
=
k!
\sum_{\substack{\mu_1,\ldots,\mu_D\ge0\\
\mu_1+\cdots+\mu_D=k}}
\prod_{a=1}^D\frac{1}{(\mu_a!)^2}=k!
\sum_{p=1}^k
\binom{D}{p}
\sum_{\substack{\mu_1,\ldots,\mu_p\ge1\\
\mu_1+\cdots+\mu_p=k}}
\prod_{i=1}^p\frac{1}{(\mu_i!)^2},
\end{equation}
which proves the equality in \eref{eq:A_kD_expree}. The leading order of \eref{eq:A_kD_EXP2} comes from the case $p=k$, where the contribution reads
\begin{equation}
    k!\binom{D}{k}=
D(D-1)\cdots(D-k+1)=D^k+\caO_k(D^{k-1}).
\end{equation}
Contribution from any other cases is of order $\caO_k(D^{k-1})$, since only $\caO_k(D^{k-1})$ of terms in the summation is involved. This completes the proof of \pref{prp:A_kD_upp}.
\end{proof}

\subsection{Proof of \pref{prp:N_beta_upperbound}}\label{sec:app_prove_N_beta_upp}
\begin{proof}[Proof of \pref{prp:N_beta_upperbound}]
Since
\begin{equation}
        \rmG_{\pi,\sigma}(\beta)
    =
    \sum_{\bfm,\bfn \in [D]^k}\frac{ \<\mathbf x(\bfm,\bfn)|
    V_\beta
    |\mathbf y(\bfm,\bfn)\>}{\rmS(\bfm) \rmS(\bfn)}
   .
\end{equation}
The matrix element
\begin{equation}
    \left\langle
    \mathbf x(\mathbf m,\mathbf n)
    \middle|
    V_\beta
    \middle|
    \mathbf y(\mathbf m,\mathbf n)
    \right\rangle
\end{equation}
is nonzero exactly when the indices of
\(\mathbf x(\mathbf m,\mathbf n)\) match those of
\(V_\beta\mathbf y(\mathbf m,\mathbf n)\).  Its
element is a product of Kronecker delta functions.  Hence its nonzero
condition imposes only equality constraints among the variables
\begin{equation}
    m_1,\ldots,m_k,n_1,\ldots,n_k .
\end{equation}

After all such equality constraints are imposed, suppose that
\(f_{\pi,\sigma}(\beta)\le 2k\) independent indices remain.  These free indices are
precisely the independent choices left in the summation over
\(\mathbf m,\mathbf n\).  Each independent free index can take \(D\) possible values. 

Therefore, if $V_\beta$ does not perfectly match the indices in $\bfx(\bfm,\bfn)$ and $\bfy(\bfm,\bfn)$ for every choice of $\bfm$ and $\bfn$, we will have $f_{\pi,\sigma}(\beta)\le 2k-1$. The $\rmG_{\pi,\sigma}(\beta)$ is then upper bounded as 
\begin{equation}
    \rmG_{\pi,\sigma}(\beta)=\sum_{\bfm,\bfn \in [D]^k}\frac{ \<\mathbf x(\bfm,\bfn)|
    V_\beta
    |\mathbf y(\bfm,\bfn)\>}{\rmS(\bfm) \rmS(\bfn)}\le\sum_{\bfm,\bfn \in [D]^k} \<\mathbf x(\bfm,\bfn)|
    V_\beta
    |\mathbf y(\bfm,\bfn)\>=D^{f_{\pi,\sigma}(\beta)}\le D^{2k-1}.
\end{equation}

If $V_\beta$ perfectly matches the indices, we have $ \<\mathbf x(\bfm,\bfn)|
    V_\beta
    |\mathbf y(\bfm,\bfn)\>=1$, which leads to
\begin{equation}
     \rmG_{\pi,\sigma}(\beta)=\sum_{\bfm,\bfn \in [D]^k}\frac{1}{\rmS(\bfm) \rmS(\bfn)}=\left[A_k(D)\right]^2=D^{2k}+\caO_k(D^{2k-1}),
\end{equation}
where the second equality follows from \pref{prp:A_kD_upp}. This proves \pref{prp:N_beta_upperbound}
\end{proof}

\subsection{Proof of \pref{prp:C_leading_order}}\label{sec:app_prove_C_lead}

\begin{proof}[Proof of \pref{prp:C_leading_order}]
We first prove that if $\alpha$ has a fixed point, equivalently a
$1$-cycle, then its contribution is suppressed. 

Let
$s=s_p+s_q$ be the total number of fixed points of $\alpha$, where
$s_p$ and $s_q$ denote the numbers of fixed points associated with
$P$-type and $Q$-type factors, respectively. Using the cycle expansion
of permutation traces, we have
\begin{align}\label{eq:app_C_upp}
    \left|
    \mathbb{E}_{P,Q\sim\caE_{\mathrm{int}}}
    C_\alpha(P,Q)
    \right|
    &=
    \left|
    \mathbb{E}_{P,Q\sim\caE_{\mathrm{int}}}
    \prod_{c\in \mathrm{cyc}(\alpha)}
    \tr\left[
    \prod_{\omega\in c} B^{(\omega)}
    \right]
    \right|
    \notag\\
    &\le
    D^{|\mathrm{cyc}(\alpha)|-s}
    \mathbb{E}_{P\sim\caE_{\mathrm{int}}}
    \left|\tr(P)\right|^{s_p}
    \mathbb{E}_{Q\sim\caE_{\mathrm{int}}}
    \left|\tr(Q)\right|^{s_q}
    \notag\\
    &\le
    D^{|\mathrm{cyc}(\alpha)|-s}
    \mathbb{E}_{U\sim\caE_{\mathrm{int}}}
    \left|\tr(U)\right|^{s}.
\end{align}
Here the first inequality follows from
\[
    \left|
    \tr\left[
    \prod_{\omega\in c} B^{(\omega)}
    \right]
    \right|
    \le D
\]
for every nontrivial cycle $c$. The second inequality follows from H\"older's inequality and the
fact that $P$ and $Q$ are drawn from the same ensemble.

Since every cycle that is not a $1$-cycle has length at least $2$, the
number of nontrivial cycles satisfies
\begin{equation}\label{eq:app_nonsingle_upp}
    |\mathrm{cyc}(\alpha)|-s
    \le
    \left\lfloor\frac{4k-s}{2}\right\rfloor
    =
    \begin{cases}
        2k-\dfrac{s}{2}, & \text{if $s$ is even}, \\
        2k-\dfrac{s+1}{2}, & \text{if $s$ is odd}.
    \end{cases}
\end{equation}
On the other hand, by the $k$-th order trace-suppression condition and
\pref{prp:lower_trace_suppression}, we have
\begin{equation}\label{eq:app_trace_upp}
    \mathbb{E}_{U\sim\caE_{\mathrm{int}}}
    \left|\tr(U)\right|^s
    \le
    \begin{cases}
        \caO_k\left(D^{s/2-\epsilon}\right),
        & \text{if $s$ is even}, \\[4pt]
        \caO_k\left(D^{(s+1)/2-\epsilon}\right),
        & \text{if $s$ is odd},
    \end{cases}
\end{equation}
for some $\epsilon=\Omega_k(1)$. Combining
\eqsref{eq:app_C_upp}{eq:app_trace_upp} with
\eref{eq:app_nonsingle_upp}, we obtain
\begin{equation}
    \left|
    \mathbb{E}_{P,Q\sim\caE_{\mathrm{int}}}
    C_\alpha(P,Q)
    \right|
    \le
    \caO_k\left(D^{2k-\epsilon}\right).
\end{equation}
Therefore, any permutation with at least one fixed point cannot contribute
to the leading order $\caO(D^{2k})$.

It remains to consider permutations $\alpha$ with no fixed points but with at least one cycle of length greater than two. Since $\alpha$ has no fixed points, every cycle in its cycle decomposition has length at least two. The presence of at least one cycle of length greater than two then reduces the maximum possible number of cycles. More precisely, because $\alpha$ acts on $4k$ elements, the largest number of cycles under these assumptions is obtained by using cycles of length two whenever possible, except for the necessary longer-cycle structure. Hence
\begin{equation} 
|\mathrm{cyc}(\alpha)|\le 2k-1. 
\end{equation} 
Therefore, \begin{equation} 
\left| \mathbb{E}_{P,Q\sim\caE_{\mathrm{int}}} C_\alpha(P,Q) \right| \le D^{|\mathrm{cyc}(\alpha)|} \le D^{2k-1}, 
\end{equation} which is suppressed relative to the leading order $\caO(D^{2k})$. Consequently, a contribution of order $\caO(D^{2k})$ can arise only when
$\alpha$ consists of $2k$ disjoint transpositions. This proves \pref{prp:C_leading_order}.
\end{proof}

\subsection{Proof of \lref{lem:SUMIDENTITY}}
\label{sec:app_prove_SUMIDENTITY}

\begin{proof}[Proof of \lref{lem:SUMIDENTITY}]
Let $[k]=\{1,\ldots,k\}$. For a permutation $\eta\in S_k$, we write \begin{equation} y^{|\mathrm{Fix}(\eta)|} = \prod_{i=1}^{k} \left[1+(y-1)\delta_{\eta(i)=i}\right]. \end{equation}

We now expand this product.  In the expansion, for each position
$i\in[k]$ one chooses either the term $1$ or the term
$(y-1)\delta_{\eta(i)=i}$.  Equivalently, one chooses a subset
$A\subseteq[k]$ of positions from which the second term is selected.
This gives
\begin{equation}
    \prod_{i=1}^{k}
    \left[1+(y-1)\delta_{\eta(i)=i}\right]
    =
    \sum_{A\subseteq[k]}
    (y-1)^{|A|}
    \prod_{i\in A}\delta_{\eta(i)=i}.
\end{equation}
The product of indicators is equal to $1$ precisely when every element
of $A$ is fixed by $\eta$, and is equal to $0$ otherwise.  Summing over
all permutations, we obtain
\begin{align}
    \sum_{\eta\in S_k} y^{|\mathrm{Fix}(\eta)|}
    &=
    \sum_{\eta\in S_k}
    \sum_{A\subseteq[k]}
    (y-1)^{|A|}
    \prod_{i\in A}\delta_{\eta(i)=i}  \notag\\
    &=
    \sum_{A\subseteq[k]}
    (y-1)^{|A|}
    \sum_{\eta\in S_k}
    \prod_{i\in A}\delta_{\eta(i)=i} \notag \\
    &=
    \sum_{A\subseteq[k]}
    (y-1)^{|A|}
    \left|
    \{\eta\in S_k:\eta(i)=i \text{ for all } i\in A\}
    \right| .
\end{align}

It remains to count the permutations in the last set.  If all elements
of $A$ are fixed pointwise, then the permutation is completely
determined by its action on the complement $[k]\setminus A$.  Conversely,
any permutation of $[k]\setminus A$, together with the requirement that
all elements of $A$ are fixed, defines a unique permutation of $[k]$.
Thus
\begin{equation}
    \left|
    \{\eta\in S_k:\eta(i)=i \text{ for all } i\in A\}
    \right|
    =
    (k-|A|)! .
\end{equation}
Therefore,
\begin{equation}
    \sum_{\eta\in S_k} y^{|\mathrm{Fix}(\eta)|}
    =
    \sum_{A\subseteq[k]}
    (y-1)^{|A|}(k-|A|)! .
\end{equation}

Finally, we group subsets according to their size.  For each
$r=0,\ldots,k$, there are $\binom{k}{r}$ subsets $A\subseteq[k]$ with
$|A|=r$.  Hence
\begin{equation}
\begin{aligned}
    \sum_{\eta\in S_k} y^{|\mathrm{Fix}(\eta)|}
    &=
    \sum_{r=0}^{k}
    \binom{k}{r}(y-1)^r(k-r)!  \\
    &=
    \sum_{r=0}^{k}
    \frac{k!}{r!(k-r)!}(y-1)^r(k-r)!  \\
    &=
    k!\sum_{r=0}^{k}\frac{(y-1)^r}{r!}.
\end{aligned}
\end{equation}
This proves \lref{lem:SUMIDENTITY}.
\end{proof}

\subsection{Proof of \thref{thm:main_oneH_int}}
Prepared with the results above, we are ready to prove \thref{thm:main_oneH_int}.
\begin{proof}[Proof of \thref{thm:main_oneH_int}]
    First, we define the subset $M_{4k}$ of the permutation group $S_{4k}$ to be the set that contains all permutations that consist of $2k$ disjoint transpositions. By \eref{eq:wg_scaling_schur} and \pref{prp:N_beta_upperbound}, we have
    \begin{align}\label{eq:app_Ave_R}
         \overline{\rmR}_{\pi,\sigma}
        &=
        \sum_{\alpha,\beta \in S_{4k}}
        \mathrm{Wg}^{(4k)}_D(\alpha^{-1}\beta)\,
        \bbE_{P,Q\sim \caE_{\mathrm{int}}}
        \left[
            C_{\alpha}(P,Q)
        \right]\,
        \rmG_{\pi,\sigma}(\beta)\notag\\
        &=\sum_{\alpha\in M_{4k}}D^{-4k}
 \bbE_{P,Q\sim \caE_{\mathrm{int}}}
        \left[C_{\alpha}(P,Q)\right] \rmG_{\pi,\sigma}(\alpha) +o_{D\to\infty}(1).
        \end{align}
In order to give the leading order contribution one must impose $\rmG_{\pi,\sigma}(\alpha)=D^{2k}$, which indicates that $V_\alpha$ perfectly matches all the indices of $\mathbf x(\bfm,\bfn)$ and $\mathbf y(\bfm,\bfn)$.

\[
\begin{array}{c|c|c|c}\
\omega & B^{(\omega)} & x_\omega & y_\omega\\
\hline
a_r & P & m_r & n_r\\
b_r & Q^\dagger & n_r & m_r\\
c_r & P^\dagger & n_{\sigma(r)} & m_{\pi(r)}\\
d_r & Q & m_{\pi(r)} & n_{\sigma(r)}
\end{array}
\]

For each \(1\le r\le k\), the two occurrences of the index \(m_r\) in
\(\bfy\) must be mapped to the corresponding two occurrences of \(m_r\) in
\(\bfx\).  Equivalently, the following two sets of labels must be matched:
\begin{align}
    \left\{b_r,c_{\pi^{-1}(r)}\right\}
    &\longrightarrow
    \left\{a_r,d_{\pi^{-1}(r)}\right\},
    \label{eq:m_matching_set_app}
    \\
    \left\{a_r,d_{\sigma^{-1}(r)}\right\}
    &\longrightarrow
    \left\{b_r,c_{\sigma^{-1}(r)}\right\}.
    \label{eq:n_matching_set_app}
\end{align}
The same argument applies to the \(n_r\) indices.  Therefore, for each
\(r\), there are two possible local matching patterns.

First, consider the direct matching
\begin{equation}
    b_r\leftrightarrow a_r,
    \qquad
    c_{r}\leftrightarrow d_{r}.
    \label{eq:app_direct_matching}
\end{equation}
Using the identity
\begin{equation}
    \tr\left[(A\otimes B)\SWAP\right]=\tr(AB),
\end{equation}
this matching gives the contribution in $C_{\alpha}(P,Q)$
\begin{equation}
    \left|\tr(Q^\dagger P)\right|^2 .
\end{equation}

Second, consider the crossed matching
\begin{align}
    b_r\leftrightarrow d_{\pi^{-1}(r)},
    \qquad
    a_r\leftrightarrow c_{\pi^{-1}(r)},
    \label{eq:app_matching_m}
    \\
    d_{\sigma^{-1}(r)}\leftrightarrow b_r,
    \qquad
    a_r\leftrightarrow c_{\sigma^{-1}(r)} .
    \label{eq:app_matching_n}
\end{align}
For this matching, the operator contraction gives the contribution in $C_{\alpha}(P,Q)$
\begin{equation}
    \tr(Q^\dagger Q)\tr(P^\dagger P)=D^2 .
\end{equation}
However, the two crossed matchings in
\eqsref{eq:app_matching_m}{eq:app_matching_n} must be mutually consistent.
Since the same label \(b_r\) is matched both with
\(d_{\pi^{-1}(r)}\) and with \(d_{\sigma^{-1}(r)}\), and the same label
\(a_r\) is matched both with \(c_{\pi^{-1}(r)}\) and with
\(c_{\sigma^{-1}(r)}\), we must have
\begin{equation}
    \pi^{-1}(r)=\sigma^{-1}(r).
\end{equation}
Equivalently,
\begin{equation}
    \eta(r)=\sigma\pi^{-1}(r)=r .
\end{equation}
Thus the crossed matching is allowed only when $  r\in\mathrm{Fix}(\eta)$, where \(\mathrm{Fix}(\eta)\) denotes the set of fixed points of the
permutation \(\eta\). Therefore, among all the choices of $\alpha \in S_{4k}$, the number of the crossed matching pairs is upper bounded by $\left|\mathrm{Fix}(\eta)\right|\le k$. Suppose that for a specific permutation $\alpha_0$, the number of the crossed matching pairs is $0\le N_c \le \left|\mathrm{Fix}(\eta)\right|$, and therefore the number of the direct matching pairs is $N_d = k-N_c$. The contribution to the summation in \eref{eq:app_Ave_R} is 
\begin{equation}
    D^{-2k}D^{2N_c}\bbE_{P,Q\sim\caE_{\mathrm{int}}}\left[\left|\tr(Q^\dagger P)\right|^{2N_d}\right]=D^{2(N_c-k)}\bbE_{P,Q\sim\caE_{\mathrm{int}}}\left[\left|\tr(Q^\dagger P)\right|^{2(k-N_c)}\right],
\end{equation}
where we have used the fact that $\rmG_{\pi,\sigma}(\alpha)=D^{2k}$ by perfect matching. Note the number of permutations with $N_c$ crossed matchings is
\begin{equation}
    \binom{\left|\mathrm{Fix}(\eta)\right|}{N_c}.
\end{equation}
Therefore, summing over all perfect matching permutations $\alpha$, the summation in \eref{eq:app_Ave_R} reads
\begin{align}
      \overline{\rmR}_{\pi,\sigma}&=\sum_{j=0}^{\left|\mathrm{Fix}(\eta)\right|}\binom{\left|\mathrm{Fix}(\eta)\right|}{j} D^{2(j-k)}\bbE_{P,Q\sim\caE_{\mathrm{int}}}\left[\left|\tr(Q^\dagger P)\right|^{2(k-j)}\right]\notag\\
      &=\bbE_{P,Q\sim\caE_{\mathrm{int}}}\left[\sum_{j=0}^{\left|\mathrm{Fix}(\eta)\right|}\binom{\left|\mathrm{Fix}(\eta)\right|}{j}q^{k-j}\right]=\bbE_{P,Q\sim\caE_{\mathrm{int}}}\left[\left(1+q^{-1}\right)^{\left|\mathrm{Fix}(\eta)\right|}q^{k}\right],
\end{align}
where $q=D^{-2}\left|\tr(Q^\dagger P)\right|^2$. Finally,

\begin{align}
    \overline{F}_{\caE_{\scrN}(H,\caE_{\mathrm{int}})}^{(k)}
        &=
        \sum_{\pi,\sigma\in S_k}\overline{\rmR}_{\pi,\sigma}=\sum_{\pi,\sigma}\bbE_{P,Q\sim\caE_{\mathrm{int}}}\left[\left(1+q^{-1}\right)^{\left|\mathrm{Fix}(\eta)\right|}q^{k}\right]+o_{D\to\infty}(1)\notag\\
        &=k!\sum_{\eta\in S_k}\bbE_{P,Q\sim\caE_{\mathrm{int}}}\left[\left(1+q^{-1}\right)^{\left|\mathrm{Fix}(\eta)\right|}q^{k}\right]+o_{D\to\infty}(1)\notag\\
        &=(k!)^2\sum_{r=0}^{k}\bbE_{P,Q\sim\caE_{\mathrm{int}}}\left(\frac{q^{k-r}}{r!}\right)+o_{D\to\infty}(1)\notag\\
        &=(k!)^2\sum_{r=0}^k \frac{D^{2(r-k)}}{r!} F_{\caE_{\mathrm{int}}}^{(k-r)}+o_{D\to\infty}(1)\notag\\
        &= k!+\sum_{\ell=1}^{k}\frac{(k!)^2F_{\caE_{\mathrm{int}}}^{(\ell)}}{(k-\ell)!D^{2\ell}}+o_{D\to\infty}(1),
\end{align}
where in the second line we change the summation over $\pi$ and $\sigma$ to the equivalent labels $\eta$ and $\sigma$, where $\sigma$ is an irrelevant label giving a contribution of $k!$. The third line follows from \lref{lem:SUMIDENTITY}. The fourth line follows directly from the definition of frame potential and the last line follows from  $F_{\caE_{\mathrm{int}}}^{(0)}=1$. This proves \thref{thm:main_oneH_int}.
\end{proof}

\section{Frame potential of $\caE_{\scrL(H)}$}\label{sm:SMFrameLH}
In this section we compute the frame potential $F_{\caE_{\scrL(H)}}$ of a fixed chaotic Hamiltonian \(H\) in the main text. Let
$ H=\sum_{a=1}^{D} E_a |a\>\<a|$ be the spectral decomposition of \(H\). The \(k\)-th frame potential reads
\begin{equation}
    F_{\caE_{\scrL(H)}}^{(k)}
    =
    \lim_{T\to\infty}
    \frac{1}{T^2}
    \int_0^T dt_1dt_2
    \left|
    \tr\left[e^{-iH(t_1-t_2)}\right]
    \right|^{2k},
    \label{eq:appFrameLHDef}
\end{equation}
which can be expanded as
\begin{align}
    F_{\caE_{\scrL(H)}}^{(k)}
    &=
    \sum_{a_1,\ldots,a_k=1}^{D}
    \sum_{b_1,\ldots,b_k=1}^{D}
    \lim_{T\to\infty}
    \frac{1}{T^2}
    \int_0^T dt_1dt_2\notag\\
    &\quad\exp\left[
    -i(t_1-t_2)
    \sum_{i=1}^{k}\left(E_{a_i}-E_{b_i}
    \right)
    \right].
    \label{eq:appFrameLHExpanded}
\end{align}
The only surviving terms in \eref{eq:appFrameLHExpanded} satisfy the energy conservation $\sum_{r=1}^k \left(E_{a_r}-E_{b_r}\right)=0$.
Therefore,
\begin{equation}
    F_{\caE_{\scrL(H)}}^{(k)}
    =
    \sum_{a_1,\ldots,a_k=1}^{D}
    \sum_{b_1,\ldots,b_k=1}^{D}
    \mathbf \delta
    \left[
    \sum_{i=1}^{k}E_{a_i}
    =
    \sum_{i=1}^{k}E_{b_i}
    \right],
    \label{eq:appFrameLHIndicator}
\end{equation}
which counts the number of pairings satisfying the energy-conservation
condition. 
For a  vector $\bfa=(a_1,\ldots,a_k)$, define its
occupation vector $\boldsymbol{\gamma}=(\gamma_1,\ldots,\gamma_D)$, with
\begin{equation}
        \gamma_j(\bfa)
    =
    \left|\{i:a_i=j\}\right|,
    \quad
    j=1,\ldots,D.
\end{equation}
counting the number of occurrences of label $j$ in the vector $\bfa$. Then we have $\gamma_1+\cdots+\gamma_D=k$. For a fixed occupation vector \((\gamma_1,\ldots,\gamma_D)\), the number of
vectors \(\bfa\) with this occupation vector is
\begin{equation}
    \frac{k!}{\gamma_1!\cdots \gamma_D!}
\end{equation}
Hence the sum over pairs of vectors $(\bfa,\bfb)$ in \eref{eq:appFrameLHIndicator} can be rewritten as a sum over occupation vectors:
\begin{align}
    F^{(k)}_{\caE_{\scrL}(H)}
    &=
    \sum_{\substack{\gamma_1+\cdots+\gamma_D=k\\ \gamma_i\ge0}}
    \left(
        \frac{k!}{\gamma_1!\cdots \gamma_D!}
    \right)^2\notag\\
    &=
    k!D^k+\caO_k(D^{k-1}).
    \label{eq:appFrameLHOccupation}
\end{align}

\end{document}